\newcommand{\bN}{ {\mathbb  N}}
\newcommand{\bZ} { {\mathbb Z}}
\newcommand{\bQ}{ {\mathbb  Q}}
\newcommand{\bC}{ {\mathbb  C}}
\newcommand{\bK}{ {\mathbb  K}}
\newcommand{\bE}{ {\mathbb E}}
\newcommand{\bx}{ {\mathbf x}}
\newcommand{\by}{ {\mathbf y}}
\newcommand{\bz}{ {\mathbf z}}
\newcommand{\bc}{ {\mathbf c}}
\newcommand{\bs}{ {\mathbf s}}
\newcommand{\bt}{ {\mathbf t}}
\newcommand{\bff}{ {\mathbf f}}
\newcommand{\bpa}{ {\boldsymbol \partial}}
\newcommand{\bdelta}{ {\boldsymbol \delta}}
\newcommand{\bu}{ {\mathbf u}}
\newcommand{\bw}{ {\mathbf w}}
\newcommand{\bv}{ {\mathbf v}}
\newcommand{\bm}{ {\mathbf m}}
\newcommand{\ie}{{\it i.e.}}
\newcommand{\bzero}{ {\mathbf 0}}
\newcommand{\TT}{ \operatorname{T}}
\newcommand{\HT}{{\operatorname{HT}}}
\newcommand{\IE}{{\operatorname{IE}}}
\newcommand{\PT}{{\operatorname{PT}}}
\newcommand{\PE}{{\operatorname{PE}}}
\newcommand{\HC}{{\operatorname{HC}}}
\newcommand{\rank}{ \operatorname{rank}}
\newcommand{\pa}{ {\partial}}
\newcommand{\ind}{ \operatorname{ind}}
\newcommand{\In}{ \operatorname{in}}
\newcommand{\sol}{ \operatorname{sol}}
\newcommand{\Drat}{ {\bK(\bx)[\bpa]}}
\newcommand{\Dpol}{ {\bK[\bx][\bpa]}}
\def\rank{\operatorname{rank}}
\def\ind{\operatorname{ind}}
\newtheorem{thm}{Theorem}[section]
\newtheorem{cor}[thm]{Corollary}
\newtheorem{lemma}[thm]{Lemma}
\newtheorem{prop}[thm]{Proposition}
\newtheorem{defn}[thm]{Definition}
\newtheorem{ex}[thm]{Example}
\newtheorem{algo}[thm]{Algorithm}
\begin{document}

\title{Apparent Singularities of~D-finite~Systems}

\author[1]{Shaoshi Chen\thanks{Supported by the NSFC grant 11501552 and
by the President Fund of the Academy of
Mathematics and Systems Science, CAS (2014-cjrwlzx-chshsh).  Email: schen@amss.ac.cn}}
\author[2]{Manuel Kauers\thanks{Supported by the Austrian Science Fund (FWF): F50-04 and Y464-N18. Email: manuel@kauers.de}}
\author[3]{Ziming Li\thanks{Supported by the NSFC grants (91118001, 60821002/F02)
and a 973 project (2011CB302401). Email: zmli@mmrc.iss.ac.cn}}
\author[4]{Yi Zhang\thanks{Supported by the Austrian Science Fund (FWF): Y464-N18 and P29467-N32. Email: zhangy@amss.ac.cn}}
\affil[1,3]{KLMM, AMSS, Chinese Academy of Sciences, Beijing, China}
\affil[2, 4]{Institute for Algebra, Johannes Kepler University Linz, Austria}
\affil[4]{RICAM, Austrian Acedemy of Sciences, Austria}

\renewcommand\Authands{ and }


\date{}
\maketitle

\begin{abstract}
We generalize the  notions of singularities and ordinary points from linear ordinary differential equations to D-finite systems.
Ordinary points of a D-finite system are characterized in terms of its formal power series solutions.
We also show that apparent singularities can be removed like in the univariate
case by adding suitable additional solutions to the system at hand.
Several algorithms are presented for removing and detecting apparent singularities.
In addition,  an algorithm is given for computing formal power series solutions of a D-finite system
at apparent singularities.  
\end{abstract}

\section{Introduction}\label{SECT:intro}

Ordinary linear differential equations allow easy access to the singularities of
their solutions: every point $\alpha$ which is a singularity of some solution
$f$ of the differential equation must be a zero of the coefficient of the
highest order derivative appearing in the equation, or a singularity of one of
the other coefficients. For example, $x^{-1}$~is a solution of the equation $x
f'(x) + f(x)=0$, and the singularity at $0$ is reflected by the root of the
polynomial~$x$ in front of the term $f'(x)$ in the equation. Unfortunately, the
converse is not true: there may be roots of the leading coefficient which do not
indicate solutions that are singular there. For example, all the solutions of
the equation $x f'(x) - 5 f(x) = 0$ are constant multiples of~$x^5$, and none of
these functions is singular at~$0$.

For a differential equation $p_0(x)f(x) + \cdots + p_r(x)f^{(r)}(x)=0$ with
polynomial coefficients $p_0,\dots,p_r$ and $p_r\neq0$, the roots of $p_r$ are called the
singularities of the equation. Those roots $\alpha$ of $p_r$ such that the
equation has no solution that is singular at~$\alpha$ are called
\emph{apparent.}  In other words, a root $\alpha$ of $p_r$ is apparent if the
differential equation admits $r$ linearly independent formal power series solutions in
$x-\alpha$. Deciding whether a singularity is apparent is therefore the same as
checking whether the equation admits a fundamental system of formal power series
solutions at this point. This can be done by inspecting the so-called
\emph{indicial polynomial} of the equation at~$\alpha$:
if there exists a power series
solution of the form $(x-\alpha)^\ell + \cdots$, then~$\ell$ is a root of this polynomial.

When some singularity $\alpha$ of an ODE is apparent, then it is always possible
to construct a second ODE whose solution space contains all the solutions of the
first ODE, and which does not have $\alpha$ as a singularity. This process is
called desingularization. The idea is easily explained. The key observation is
that a point $\alpha$ is a singularity if and only if the indicial polynomial at
$\alpha$ is different from $n(n-1)\cdots(n-r+1)$ or the ODE does not
admit $r$ linearly independent formal power series solutions in
$x-\alpha$. As the indicial polynomial at
an apparent singularity has only nonnegative integer roots, we can bring it into
the required form by adding a finite number of new factors.  Adding a factor
$n-s$ to the indicial polynomial amounts to adding a solution of the form
$(x-\alpha)^s+\cdots$ to the solution space, and this is an easy thing to do
using well-known arithmetic of differential operators.
See~\cite{Abramov2006,Barkatou2015,Chen2016,Ince1926,Max2013} for an
expanded version of this argument and~\cite{Abramov2006,Abramov1999} for analogous algorithms for
recurrence equations.

The purpose of the present paper is to generalize the two facts sketched above to
the multivariate setting. Instead of an ODE, we consider systems of PDEs known as
D-finite systems. For such systems, we define the notion of a singularity in terms
of the polynomials appearing in them (Definition~\ref{DEF:op}). We show in Theorem~\ref{THM:chop} that
a point is a singularity of the system unless it admits a basis of power series
solutions in which the starting terms are as small as possible with respect to
some term order. Then a singularity is apparent if the system admits a full basis of power
series solutions, the starting terms of which are not as small as possible. We then prove in
Theorem~\ref{THM:rmappsin} that apparent singularities can be removed like in the univariate
case by adding suitable additional solutions to the system at hand.
The resulting system will be contained in the Weyl closure~\cite{tsai00} of the original ideal,
but unlike Tsai~\cite{tsai00} we cannot guarantee that it is equal to the Weyl closure.
Based on Theorem~\ref{THM:chop} and Theorem~\ref{THM:rmappsin}, we show
how to remove a given apparent singularity (Algorithms~\ref{ALGO:desingularization1} and~\ref{ALGO:rand}),
 and how to detect whether a given point is an apparent singularity (Algorithm~\ref{ALGO:desingularization2}).
At last, we present an algorithm for computing formal power series solutions of a D-finite system
at apparent singularities.

\section{Preliminaries}\label{SECT:op}

In this section, we recall some notions and conclusions concerning linear partial differential operators, Gr\"obner bases,
formal power series, solution spaces and Wronskians for D-finite systems. We also specify notation to be used in the
rest of this paper.

\subsection{Rings of differential operators} \label{SUBSECT:ringdo}
Throughout the paper, we assume that $\bK$ is a field of characteristic zero and~$n$ is a positive integer.
For instance, $\bK$ can be the field of complex numbers.
Let $\bK[\bx]=\bK[x_1, \ldots, x_n]$ be the ring of usual commutative polynomials over~$\bK$.
The quotient field of $\bK[\bx]$ is denoted by $\bK(\bx)$.
Then we have the \emph{ring of differential operators with rational function
coefficients} $\bK(\bx)[\pa_1, \ldots, \pa_n]$,
in which addition is coefficient-wise and  multiplication is defined by associativity via the
commutation rules
\begin{itemize}
 \item [(i)] $\pa_i \pa_j = \pa_j \pa_i$;
 \item [(ii)] $\pa_i f = f \pa_i + \frac{\pa f}{\pa x_i} \text{ for each } f \in \bK(\bx)$,
\end{itemize}
where $\frac{\pa f}{\pa x_i}$ is the usual derivative of~$f$ with respect to $x_i$, $1 \le i, j \le n$.
This ring is an Ore algebra~\cite{Robertz2014, Salvy1998} and denoted by $\Drat$ for brevity.

Another ring is $\Dpol:=\bK[x_1, \ldots, x_n][\pa_1, \ldots, \pa_n]$, which is a subring of~$\Drat$.
We call it the \emph{ring of differential operators with polynomial coefficients} or
the \emph{Weyl algebra}~\cite[Section 1.1]{Saito1999}.

A left ideal $I$ in $\Drat$ is called \emph{D-finite} if the quotient $\Drat \slash I$
is a finite dimensional vector space over $\bK(\bx)$.
The dimension of $\Drat \slash I$ as a vector
space over $\bK(\bx)$ is called the \emph{rank} of $I$ and denoted by~$\rank(I)$.

For a subset $S$ of $\Drat$, the left ideal generated by $S$ is denoted by~$\Drat S$.
For instance, let~$I = \bQ(x_1, x_2)[\pa_1, \pa_2] \ \{ \pa_1 - 1, \pa_2 - 1 \}$.
Then~$I$ is D-finite because the quotient $\bQ(x_1, x_2)[\pa_1, \pa_2] \slash I$
is a vector space of dimension~$1$ over $\bQ(x_1, x_2)$.
Thus, $\rank(I) = 1$.

\subsection{Gr\"{o}bner bases} \label{SUBSECT:gb}
Gr\"{o}bner bases in $\Drat$
are well known~\cite{Weispfenning1990} and implementations for them are available
for example in the Maple package {\tt Mgfun}~\cite{Chyzak2008} and in the
Mathematica package {\tt HolonomicFunctions.m}~\cite{Christoph2010}.
We briefly summarize some facts about Gr\"{o}bner bases.

We denote by $\TT(\bpa)$ the commutative monoid generated by $\pa_1, \ldots, \pa_n$.
An element of $\TT(\bpa)$ is called a {\em term}. For a vector $\bu=(u_1, \ldots, u_n) \in \bN^n$,
the symbol~$\bpa^\bu$ stands for the term~$\pa_1^{u_1} \cdots \pa_n^{u_n}$. The {\em order} of~$\bpa^\bu$
is defined to be $|\bu| := u_1+ \cdots + u_n$. For a nonzero operator~$P \in \Drat$, the \emph{order} of~$P$
is defined to be the highest order of the terms that appear in~$P$ effectively.

Let $\prec$~\footnote{In examples, we use the graded lexicographic ordering with $\partial_n\succ\dots\succ\partial_1$.}
be a graded monomial ordering~\cite[Definition 1, page 55]{Cox2015} on $\bN^n$.
Since there is a one-to-one correspondence between terms in $\TT(\bpa)$
and elements in $\bN^n$, the ordering~$\prec$ on~$\bN^n$ induces an ordering
on $\TT(\bpa)$ with $\bpa^{\bu} \prec \bpa^{\bv}$ if $\bu \prec \bv$.
Our main results on apparent singularities are based on the fact that there are
 at most finitely many terms lower than a given term. So we fix a graded ordering~$\prec$ on~$\bN^n$ in the rest of the paper.

For a nonzero element $P \in \Drat$, the {\em head term} of~$P$, denoted by~$\HT(P)$, is the highest term
appearing in~$P$. The coefficient of~$\HT(P)$ is called the {\em head coefficient} of $P$ and is denoted by~$\HC(P)$.
For a subset~$S$ of nonzero elements in~$\Drat$, $\HT(S)$ and $\HC(S)$ stand for the sets of head terms and head
coefficients of the elements in~$S$, respectively.

For a Gr\"{o}bner basis $G$ in~$\Drat$, a term is said to be \emph{parametric} if it is not divisible by any term in
$\HT(G)$. The set of exponents of all parametric terms is
referred to as the set of parametric exponents of $G$ and denoted by $\PE(G)$.
If $\Drat G$ is D-finite, then its rank is also called the rank of $G$ and denoted by $\rank(G)$, which is equal to~$|\PE(G)|$.

\subsection{Formal power series} \label{SUBSECT:fps}
Let $\bK[[\bx]]$ be the ring of formal power series with respect to $x_1, \ldots, x_n$.
For $P \in \Dpol$ and $f \in \bK[[\bx]]$, there is a natural action
of $P$ on~$f$, which is denoted by~$P(f)$.
For $P, Q \in \Dpol$, it is straightforward to verify that
\begin{equation} \label{EQ:comm}
P Q(f) = P(Q(f)).
\end{equation}
For~$\bu =(u_1, \ldots, u_n) \in \bN^n$, the product $(u_1 !) \cdots (u_n !)$ is denoted by $\bu!$,  and $x_1^{u_1} \cdots x_n^{u_n}$ by $\bx^\bu$.
A formal power series can always be written in the form
\[ f = \sum_{\bu \in \bN^n} \frac{c_{\bu}}{\bu !} \bx^{\bu}, \]
where $c_{\bu} \in \bK$. Such a form is convenient for differentiation.

Taking the constant term~$c_{{\bf 0}}$ of a formal power series~$f$ gives rise to a ring homomorphism, which is denoted by~$\phi$.
A direct calculation yields
\begin{equation} \label{EQ:zf}
\phi\left( \bpa^{\bu}(f) \right) =c_{\bu}.
\end{equation}
Thus, we can determine whether a formal power series is zero by differentiating
and taking constant terms, as stated in the next lemma.
\begin{lemma} \label{LM:zero}
Let $f \in \bK[[\bx]]$. Then $f = 0$ if and only if, for all $\bu \in \bN^n$,
\[   \phi\left( \bpa^{\bu}(f) \right) = 0. \]
\end{lemma}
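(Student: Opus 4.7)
The plan is to prove the lemma as an essentially immediate consequence of equation~\eqref{EQ:zf}, which already packages the required identification between differentiation--and--evaluation and coefficient extraction. So the proof will be short, and the main work has in fact already been done when~\eqref{EQ:zf} was established; this lemma just repackages it as an effective zero test.

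First I would dispose of the forward direction. If $f = 0$, then for every multi-index $\bu \in \bN^n$ the operator $\bpa^{\bu}$ acts on $f$ to give the zero series, and since $\phi$ is a ring homomorphism (in particular $\bK$-linear) it sends $0$ to $0$. So $\phi(\bpa^{\bu}(f)) = 0$ for every $\bu$.

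For the converse, I would write $f$ in the normalized form
\[
 f = \sum_{\bu \in \bN^n} \frac{c_{\bu}}{\bu!}\, \bx^{\bu}, \qquad c_\bu \in \bK,
\]
used in the preceding paragraph, and invoke \eqref{EQ:zf} to conclude that $c_{\bu} = \phi(\bpa^{\bu}(f))$ for every $\bu$. If the right-hand side vanishes for all $\bu$, then every coefficient $c_\bu$ is zero, hence $f = 0$ as an element of $\bK[[\bx]]$.

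There is no real obstacle: the only subtle point is the bookkeeping between the ``divided-power'' normalization $c_\bu/\bu!$ and the derivatives $\bpa^\bu$, which is exactly the reason~\eqref{EQ:zf} takes the clean form it does (the factorials cancel). With~\eqref{EQ:zf} available, the lemma is just the statement that a formal power series is determined by its coefficient sequence.
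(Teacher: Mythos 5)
Your proof is correct and matches the paper's (implicit) argument: the paper states this lemma as an immediate consequence of equation~\eqref{EQ:zf}, which is exactly how you argue it, with the trivial forward direction and coefficient extraction via $c_{\bu}=\phi(\bpa^{\bu}(f))$ for the converse. Nothing further is needed.
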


The following result appears in~\cite{Gessel1981} for $s=1$. But the proof applies literally
also for arbitrary values of~$s$. Please see \cite[Lemma 4.3.3]{Zhang2017} for a detailed verification.
\begin{lemma} \label{LEM:gessel}
Let $p_1, p_2, \ldots, p_s$ and $q$ be polynomials in $\bK[\bx]$ with
$$\gcd(p_1, p_2, \ldots, p_s, q) = 1.$$
If $p_i \slash q$ has a power series expansion for each $i \in \{1, 2, \ldots, s \}$,
then the constant term of $q$ is nonzero.
\end{lemma}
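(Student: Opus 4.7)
The plan is to argue by contradiction: assume $q(0)=0$ and produce an irreducible polynomial in $\bK[\bx]$ that divides every $p_i$ and $q$, contradicting $\gcd(p_1,\ldots,p_s,q)=1$.

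Since $\bK[\bx]$ is a UFD and $q$ vanishes at the origin, $q$ admits an irreducible factor $\pi\in\bK[\bx]$ with $\pi(0)=0$. The power series hypothesis $p_i/q\in\bK[[\bx]]$ gives $p_i=qf_i$ in $\bK[[\bx]]$ for each $i$, so $\pi$ divides every $p_i$ in $\bK[[\bx]]$. Once I promote this divisibility to $\bK[\bx]$, I obtain $\pi\mid\gcd(p_1,\ldots,p_s,q)=1$, contradicting that $\pi$ is irreducible (hence a non-unit). Note that the gcd hypothesis only enters in this last step, which is why the argument is uniform in $s$ and the proof applies literally to any $s\ge 1$, as the authors remark.

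The main obstacle is the lifting step: for $\pi\in\bK[\bx]$ irreducible with $\pi(0)=0$ and $p\in\bK[\bx]$, show that $\pi\mid p$ in $\bK[[\bx]]$ forces $\pi\mid p$ in $\bK[\bx]$. I plan to use that the $\mathfrak{m}$-adic completion of the Noetherian local ring $\bK[\bx]_{\mathfrak{m}}$, for $\mathfrak{m}=(x_1,\ldots,x_n)$, is exactly $\bK[[\bx]]$ and is faithfully flat over $\bK[\bx]_{\mathfrak{m}}$. Ideal contraction then yields $(\pi)\bK[[\bx]]\cap\bK[\bx]_{\mathfrak{m}}=(\pi)\bK[\bx]_{\mathfrak{m}}$, so $p=\pi(a/b)$ with $a,b\in\bK[\bx]$ and $b(0)\neq 0$; clearing denominators gives $bp=\pi a$ in $\bK[\bx]$. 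Since $\pi$ is prime in the UFD $\bK[\bx]$ and $\pi(0)=0\neq b(0)$ rules out $\pi\mid b$, we conclude $\pi\mid p$ in $\bK[\bx]$, completing the lifting and thus the proof. If one prefers to avoid invoking faithful flatness by name, the only fact actually used is the injection $\bK[\bx]_{\mathfrak{m}}/(\pi)\hookrightarrow\bK[[\bx]]/(\pi)\bK[[\bx]]$, which follows directly from Krull's intersection theorem applied to the Noetherian local ring $\bK[\bx]_{\mathfrak{m}}/(\pi)$.
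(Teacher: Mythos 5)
Your argument is correct: assuming $q(\mathbf{0})=0$, picking an irreducible factor $\pi$ of $q$ vanishing at the origin, observing that $p_i=qf_i$ in $\bK[[\bx]]$ makes $\pi$ divide every $p_i$ in $\bK[[\bx]]$, and then contracting the ideal $(\pi)\bK[[\bx]]$ back to the local ring $\bK[\bx]_{(x_1,\ldots,x_n)}$ via faithful flatness of the completion (or, as you note, via Krull's intersection theorem applied to $\bK[\bx]_{(x_1,\ldots,x_n)}/(\pi)$) does legitimately yield $\pi\mid p_i$ in $\bK[\bx]$, contradicting $\gcd(p_1,\ldots,p_s,q)=1$; the gcd enters only at the end, so the argument is indeed uniform in $s$. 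Note that the paper itself gives no proof of this lemma: it cites Gessel's paper for the case $s=1$ and Zhang's thesis (Lemma 4.3.3) for the general verification, so there is no inline argument to compare against. What your write-up adds is a self-contained commutative-algebra proof; its one point of care, which you handle correctly, is that $\pi$ need not stay irreducible in $\bK[[\bx]]$, so one must argue via ideal contraction (or the injection $\bK[\bx]_{(x_1,\ldots,x_n)}/(\pi)\hookrightarrow\bK[[\bx]]/(\pi)\bK[[\bx]]$) rather than via primality of $\pi$ in the power series ring.
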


The fixed ordering $\prec$ on $\bN^n$ also induces an ordering on the monoid  $\TT(\bx)$ generated by $x_1, \ldots, x_n$
in the following manner: $\bx^{\bu} \prec \bx^{\bv}$ if $\bu \prec \bv$.
A nonzero element $ f \in \bK[[\bx]]$ can be written as
\[
 f = \frac{c_{\bu}}{\bu!} \bx^{\bu} + \text{ higher monomials with respect to } \prec,
\]
where~$c_{\bu} \in \bK$ is nonzero.
We call $\bu$
the \emph{initial exponent} of $f$.

\subsection{Solutions and Wronskians} \label{SUBSECT:sw}

Basic facts about solutions of linear partial differential polynomials are presented in~\cite[Chapter IV, Section 5]{Kolchin1973}. We recall them in terms of D-finite ideals.
The first proposition is a special case of Proposition 2 in \cite[page 152]{Kolchin1973}.
\begin{prop} \label{PROP:wn}
For a left ideal~$I \subset \Drat$ with rank~$d$, there exists a differential field~$\bE$ containing~$\bK[[\bx]]$ such that the solution space
of~$I$ has dimension~$d$ over $C_\bE$, where $C_\bE$ stands for the subfield of the constants in~$\bE$.
\end{prop}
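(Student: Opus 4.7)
My plan is to rewrite~$I$ as an integrable first-order Pfaffian system and then invoke the Picard--Vessiot construction for such systems to obtain the desired~$\bE$.

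\emph{Step 1: Pfaffian reformulation.} Assume $I \ne \Drat$ (otherwise $d=0$ and there is nothing to show). I fix a Gr\"obner basis~$G$ of~$I$ and list $\PE(G) = \{\balpha_1, \dots, \balpha_d\}$; since $\prec$ is graded, $\bzero$ is always parametric, so I can take $\balpha_1 = \bzero$. Reducing each $\pa_i \bpa^{\balpha_j}$ modulo~$I$ via~$G$ produces matrices $A_1, \dots, A_n \in \bK(\bx)^{d \times d}$ with the following property: in any differential overfield~$R$ of $\bK(\bx)$, an element $f \in R$ is a solution of~$I$ if and only if the column vector $\bv_f := (\bpa^{\balpha_1}(f), \dots, \bpa^{\balpha_d}(f))^\top$ satisfies the Pfaffian system $\pa_i \bv_f = A_i \bv_f$ for $i = 1, \dots, n$. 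The commutativity $\pa_i \pa_j = \pa_j \pa_i$ inside~$\Drat$ forces the integrability identities $\pa_i A_j - \pa_j A_i = A_j A_i - A_i A_j$.

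\emph{Step 2: Building~$\bE$.} I take the base differential field to be $L := \mathrm{Frac}\,\bK[[\bx]]$ with the natural extensions of $\pa_1, \dots, \pa_n$; its constant field is~$\bK$. Following the standard Picard--Vessiot construction for integrable linear systems, I adjoin to~$L$ a generic $d \times d$ matrix $Y = (y_{ij})$, impose the differential relations $\pa_i Y = A_i Y$, localise at $\det Y$, pass to the quotient by a maximal differential ideal, and let~$\bE$ be the fraction field of the resulting simple differential ring. Then $\bE \supset L \supset \bK[[\bx]]$, and the integrability conditions from Step~1 combined with the maximality of the differential ideal yield $C_\bE = \bK$.

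\emph{Step 3: Counting solutions.} The image $\overline{Y} \in \mathrm{GL}_d(\bE)$ is a fundamental solution matrix of the Pfaffian system. Because $\bpa^{\balpha_1} = 1$, the first row of $\overline{Y}$ consists of $d$ elements $f_1, \dots, f_d \in \bE$ annihilated by~$I$, and invertibility of $\overline{Y}$ forces them to be linearly independent over~$C_\bE$. Conversely, any solution $f \in \bE$ of~$I$ yields a vector $\bv_f$ satisfying the Pfaffian system, hence lying in the $C_\bE$-span of the columns of $\overline{Y}$, so $f \in C_\bE f_1 + \cdots + C_\bE f_d$. Thus the $C_\bE$-solution space has dimension exactly~$d$.

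The main obstacle is the constant-field equality $C_\bE = \bK$ in Step~2: in the multi-derivation setting, showing that a maximal differential ideal introduces no new constants requires combining integrability with a Noetherianity/transcendence-degree argument, which is precisely the technical content of Kolchin's proposition cited here. Steps~1 and~3 are essentially formal translations between D-finite ideals and Pfaffian systems.
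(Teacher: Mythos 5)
Your argument is correct in substance, but it takes the route the paper only mentions in passing rather than the one it actually relies on: the paper does not prove this proposition at all, it quotes it as a special case of Proposition 2 in Kolchin (Chapter IV, Section 5), and then remarks that the field $\bE$ ``can also be constructed by the Picard--Vessiot approach'' with pointers to van der Put--Singer (Appendix D) and Bronstein--Li--Wu. Your proof is exactly that alternative made explicit: rewrite $I$ as an integrable Pfaffian system on the $d$ parametric derivatives (the backward implication in your Step 1, which you gloss as formal, does need the small induction showing $\bpa^{\bu}(f)$ agrees with the normal form of $\bpa^{\bu}$ applied to $\bv_f$ for every term $\bpa^{\bu}$, but this is routine), build a Picard--Vessiot extension of $\mathrm{Frac}\,\bK[[\bx]]$, and count solutions via a fundamental matrix. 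What Kolchin's citation buys the authors is brevity and generality (no reduction to first-order systems, no integrability bookkeeping); what your construction buys is an explicit $\bE$ and an elementary dimension count. One caution on your Step 2: the equality $C_\bE=\bK$ is only guaranteed by the standard maximal-differential-ideal argument when the constants of the base field are algebraically closed, and the paper's $\bK$ need not be (it is only assumed of characteristic zero). This does not damage the proof, for two reasons: you may first replace $\bK[[\bx]]$ by $\overline{\bK}[[\bx]]$ (which still contains $\bK[[\bx]]$, so the resulting $\bE$ still satisfies the statement), or simply drop the claim $C_\bE=\bK$ altogether, since your Step 3 already yields dimension exactly $d$ over whatever $C_\bE$ is: the columns of the invertible matrix $\overline{Y}$ give $d$ solutions independent over $C_\bE$, and $\overline{Y}^{-1}\bv_f$ being constant for any solution $f$ gives the upper bound. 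With that adjustment your proposal is a complete and self-contained substitute for the citation.
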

Such  differential fields can also be constructed by the Picard-Vessiot approach (see, \cite[Appendix D]{PutSinger} or \cite{BronsteinLiWu}).
In the rest of this paper, we assume that $\bE$ is a differential field described in the above proposition. For a  D-finite ideal~$I$,
the solution space of~$I$ in~$\bE$ is denoted by~$\sol_\bE(I)$. Likewise, for a finite-rank Gr\"obner basis~$G$, the solution space of~$\Drat G$
is simply denoted by $\sol_\bE(G)$.

 The next proposition is an analog of differential Nullstellensatz for D-finite ideals. It is an easy consequence of Corollary~1 in \cite[page 152]{Kolchin1973}.
\begin{prop} \label{PROP:dnull}
  Let~$V \subset \bE$ be a $d$-dimensional linear subspace over~$C_\bE$.
  Then
there exists a unique left ideal~$I \subset \bE[\bpa]$ of rank~$d$ such that
$$V=\sol_\bE(I).$$ Furthermore,
an operator $P$ belongs to~$I$ if and only if~$P$ annihilates every element of~$V$.
\end{prop}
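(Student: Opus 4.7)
The strategy is to single out the natural candidate ideal, then exploit Wronskian non-vanishing to pin down its rank and solution space.

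First, I would define
\[
I := \{\, P \in \bE[\bpa] : P(v) = 0 \text{ for all } v \in V \,\}.
\]
This is a left ideal by the $\bE$-linearity of the action of $\bE[\bpa]$ on $\bE$, and the ``furthermore'' characterisation holds tautologically by construction. Uniqueness then reduces to a dimension count: if $J \subseteq \bE[\bpa]$ is any other left ideal of rank $d$ with $\sol_\bE(J) = V$, then every element of $J$ annihilates $V$ and hence lies in $I$, giving $J \subseteq I$; comparing the $\bE$-codimensions of $I$ and $J$ in $\bE[\bpa]$ then forces $J = I$.

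Next, I would fix a $C_\bE$-basis $f_1, \ldots, f_d$ of $V$ and study the $\bE$-linear evaluation map
\[
\Phi : \bE[\bpa] / I \longrightarrow \bE^d, \qquad P + I \longmapsto (P(f_1), \ldots, P(f_d)).
\]
The defining property of $I$ makes $\Phi$ well-defined and injective, which yields $\rank(I) \leq d$ at once. For the reverse inequality I would appeal to the cited Corollary~1 on page~152 of \cite{Kolchin1973}, whose content is that the generalised Wronskian of $C_\bE$-linearly independent elements of $\bE$ does not vanish. Concretely, this guarantees terms $\bpa^{\bu_1}, \ldots, \bpa^{\bu_d} \in \TT(\bpa)$ for which the matrix $\bigl( \bpa^{\bu_i}(f_j) \bigr)_{1 \leq i,j \leq d}$ is invertible over $\bE$; inverting it produces operators $L_1, \ldots, L_d \in \bE[\bpa]$ satisfying $L_i(f_j) = \delta_{ij}$, exhibiting surjectivity of $\Phi$ and hence $\rank(I) = d$.

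Finally, to show $\sol_\bE(I) = V$, note that $V \subseteq \sol_\bE(I)$ is built in. The opposite inclusion follows from the standard bound $\dim_{C_\bE} \sol_\bE(I) \leq \rank(I) = d$ together with $\dim_{C_\bE} V = d$: both $V$ and $\sol_\bE(I)$ are $d$-dimensional $C_\bE$-subspaces of $\bE$, so they must coincide.

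The only nontrivial ingredient I anticipate is invoking the non-vanishing of the generalised Wronskian at the required level of generality: in the partial-differential setting one must specify which finite tuple of monomials $\bpa^{\bu_i}$ is guaranteed to produce a nonzero determinant on a given $C_\bE$-independent family, and this is precisely the technical content supplied by Kolchin's corollary. With that cited, the rest of the argument is pure linear algebra and a rank comparison.
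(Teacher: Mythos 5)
Your argument is correct, but it is a genuinely different route from the paper's: the paper gives no self-contained proof at all, stating only that the proposition ``is an easy consequence of Corollary~1 in \cite[page 152]{Kolchin1973}'', i.e.\ it outsources the whole statement to Kolchin's Nullstellensatz-type corollary, whereas you construct the annihilator ideal $I$ explicitly and pin down its rank and solution space by hand. Your construction buys transparency: the ``furthermore'' clause and uniqueness become tautological or pure linear algebra, and the evaluation map $\Phi:\bE[\bpa]/I\to\bE^d$ together with the invertible matrix $\bigl(\bpa^{\bu_i}(f_j)\bigr)$ cleanly yields $\rank(I)=d$. Two small points to tighten. First, the Wronskian input you need is not the cited Corollary~1 but Theorem~1 of \cite[Chapter II]{Kolchin1973}, and you need its full biconditional form: the paper's Section~2.4 only quotes the direction ``some determinant nonzero $\Rightarrow$ independence over $C_\bE$'', while your surjectivity step uses the converse (independence over $C_\bE$ forces some $d\times d$ determinant of derivatives to be nonzero). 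Second, the ``standard bound'' $\dim_{C_\bE}\sol_\bE(I)\le\rank(I)$ is asserted without proof; it is true, but it is not free -- it again requires the same Wronskian theorem: given $d+1$ solutions $z_1,\dots,z_{d+1}$, any $d+1$ terms $\bpa^{\bv_1},\dots,\bpa^{\bv_{d+1}}$ are $\bE$-linearly dependent modulo $I$ (since $\dim_\bE \bE[\bpa]/I=d$), so every $(d+1)\times(d+1)$ Wronskian-like determinant of the $z_j$ vanishes, and Kolchin's theorem then gives $C_\bE$-linear dependence. With those two citations made precise, your proof is complete and arguably more informative than the paper's bare reference, at the cost of reproving what Kolchin's corollary packages.
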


Linear dependence over constants can be determined by Wronskian-like determinants~\cite[Chapter II, Theorem 1]{Kolchin1973}, which implies that a finite number of elements in~$\bK[[\bx]]$ are linearly
independent over~$\bK$ if and only if they are linearly independent over any field of constants that contains~$\bK$.

Wronskian-like determinants are expressed by elements of $\TT(\bpa)$ via wedge notation in \cite{Li2002}.
For $\bv_1, \bv_2, \ldots, \bv_{\ell} \in \bN^n$ and $\ell \in \bZ^{+}$, the exterior product
\[
  \bpa^{\bv_1} \wedge \bpa^{\bv_2} \wedge \cdots \wedge \bpa^{\bv_\ell}
\]
is defined as a multi-linear function from $\bE^{\ell}$ to~$\bE$ that maps~$(z_1, \ldots, z_\ell) \in \bE^{\ell}$ to:
\[
\begin{vmatrix}
\bpa^{\bv_1}(z_1) & \bpa^{\bv_1}(z_2) & \cdots & \bpa^{\bv_1}(z_{\ell}) \\
\bpa^{\bv_2}(z_1) & \bpa^{\bv_2}(z_2) & \cdots & \bpa^{\bv_2}(z_\ell) \\
\vdots        & \vdots        & \ddots & \vdots        \\
\bpa^{\bv_\ell}(z_1) & \bpa^{\bv_\ell}(z_2)& \cdots & \bpa^{\bv_\ell}(z_\ell)
\end{vmatrix}.
\]
It follows from Theorem~1 in~\cite[Chapter II]{Kolchin1973} that~$z_1$, \ldots, $z_\ell$
are linearly independent over~$C_\bE$ if there exist~$\bv_1, \ldots, \bv_\ell \in \bN^n$ such that
\[     (\bpa^{\bv_1} \wedge \cdots \wedge \bpa^{\bv_\ell})(z_1, \ldots, z_\ell) \neq 0. \]

Let~$G$ be a reduced and finite-rank Gr\"obner basis in $\Drat,$ the Wronskian operator of~$G$ is defined to be
\[   w_G := \bigwedge_{\bu \in \PE(G)} \bpa^\bu. \]
The following proposition is Lemma 4 in~\cite{Li2002} in slightly different notation.
\begin{prop} \label{PROP:wr}
Let~$d = \rank(G)$, $z_1, \ldots, z_d \in \sol_\bE(G)$ and~$\PE(G)=\{\bu_1, \ldots, \bu_d\}$.
\begin{itemize}
\item[(i)] The elements $z_1, \ldots, z_d$ are linearly independent over~$C_\bE$ if and only if $w_G(z_1, \ldots, z_d)$ is nonzero.
\item[(ii)] Let~$\bpa^\bv$ be the head term of an element~$g$ of~$G$, and let $z_1, \ldots, z_d$ be linearly independent over~$C_\bE$.
Set~$\bz=(z_1, \ldots, z_d)$ and
\[   w_G \wedge \bpa^\bv(\bz, \cdot) =
\begin{vmatrix}
\bpa^{\bu_1}(z_1) & \bpa^{\bu_1}(z_2) & \cdots & \bpa^{\bu_1}(z_d) & \bpa^{\bu_1} \\
\bpa^{\bu_2}(z_1) & \bpa^{\bu_2}(z_2) & \cdots & \bpa^{\bu_2}(z_d) & \bpa^{\bu_2}\\
\vdots     & \vdots     &        & \vdots     & \vdots \\
\bpa^{\bu_d}(z_1) & \bpa^{\bu_d}(z_2) & \cdots & \bpa^{\bu_d}(z_d) & \bpa^{\bu_d}\\
\bpa^{\bv}(z_1) & \bpa^{\bv}(z_2) & \cdots & \bpa^{\bv}(z_d) & \bpa^{\bv}\\
\end{vmatrix},
\]
in which the elements of~$\TT(\bpa)$ are placed on the right-hand side of a product. Then
\[ w_G(\bz)^{-1} \left( w_G \wedge \bpa^\bv(\bz, \cdot) \right)= \HC(g)^{-1} g. \]
\end{itemize}
\end{prop}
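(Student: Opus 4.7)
For part~(i), I plan to combine Kolchin's Wronskian criterion (Theorem~1 in \cite[Chapter II]{Kolchin1973}) with the Gr\"obner reduction property of~$G$. If $w_G(\bz)\neq 0$, that determinant is itself a Wronskian-like witness for linear independence of $z_1,\ldots,z_d$ over $C_\bE$, giving one direction immediately. For the converse, suppose $w_G(\bz)=0$. Since $G$ is a Gr\"obner basis, every term $\bpa^{\bv}$ reduces modulo~$G$ to an expression $\sum_k f_{\bv,k}\,\bpa^{\bu_k}$ plus an element of $\Drat G$, with $f_{\bv,k}\in\bK(\bx)$ and $\bu_k\in\PE(G)$. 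Applying to any $z_j\in\sol_\bE(G)$ kills the $\Drat G$-part, so each row $(\bpa^{\bv}(z_j))_{j=1}^{d}$ lies in the $\bE$-row-span of the Wronskian matrix $M:=(\bpa^{\bu_k}(z_j))_{k,j}$. Because $\det M=w_G(\bz)=0$, this row-span has $\bE$-dimension less than~$d$, so $\det(\bpa^{\bv_i}(z_j))_{i,j}=0$ for every choice of $\bv_1,\ldots,\bv_d\in\bN^n$, and Kolchin's criterion then forces $z_1,\ldots,z_d$ to be linearly dependent over~$C_\bE$.

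For part~(ii), I would set $P:=w_G\wedge\bpa^\bv(\bz,\cdot)$ and expand the associated determinant along its last column:
\[
  P \;=\; \sum_{i=1}^d c_i\,\bpa^{\bu_i} \,+\, w_G(\bz)\,\bpa^{\bv} \;\in\; \bE[\bpa],
\]
where each $c_i\in\bE$ is a signed $d\times d$ minor. Evaluating $P$ at~$z_k$ amounts to replacing the right-most column by the $z_k$-values of its operator entries, which duplicates the $k$-th column of the matrix and hence yields $0$; thus $P(z_k)=0$ for $k=1,\ldots,d$. By part~(i), $w_G(\bz)\neq 0$, so $Q:=w_G(\bz)^{-1}P$ is a well-defined element of $\bE[\bpa]$ supported on $\{\bpa^\bv\}\cup\{\bpa^{\bu_i}:\bu_i\in\PE(G)\}$, monic on~$\bpa^\bv$, and annihilating every~$z_k$.

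I would then compare $Q$ with $R:=\HC(g)^{-1}g$. Because $G$ is reduced, all non-leading terms of $g$ lie in $\PE(G)$, so $R$ is monic on~$\bpa^\bv$ and likewise supported on $\{\bpa^\bv\}\cup\{\bpa^{\bu_i}:\bu_i\in\PE(G)\}$; moreover $R(z_k)=0$ because $z_k\in\sol_\bE(G)$. Consequently $Q-R=\sum_{i=1}^d a_i\,\bpa^{\bu_i}$ for some $a_i\in\bE$, and it still annihilates every~$z_k$. Evaluating gives the linear system $\sum_i a_i\,\bpa^{\bu_i}(z_k)=0$ ($k=1,\ldots,d$), whose coefficient matrix is precisely~$M$ with $\det M=w_G(\bz)\neq 0$. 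Hence $a_i=0$ for all $i$, which forces $Q=R$ and proves the claimed identity.

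The main subtlety I anticipate is making the determinant-as-operator convention precise under the rule that elements of $\TT(\bpa)$ sit on the right-hand side of a product: one must check that, after expanding along the last column, evaluating at $z_k$ produces exactly the determinant whose last column has already been evaluated at~$z_k$. This is routine once the multi-linear definition of the wedge is unwound, but it is the one bookkeeping step that merits care before the otherwise purely linear-algebraic argument of the third paragraph runs cleanly.
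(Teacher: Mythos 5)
Your proof is correct, but note that the paper itself does not prove this proposition: it is quoted as Lemma~4 of~\cite{Li2002}, so there is no internal argument to compare against. Your reconstruction is the standard one and both parts check out. In part~(i), the direction ``$w_G(\bz)\neq 0 \Rightarrow$ independent'' is exactly the criterion the paper quotes from Kolchin; for the converse you correctly use that reduction modulo~$G$ writes $\bpa^{\bv}$ as a $\bK(\bx)$-combination of parametric terms plus an element of $\Drat G$ (and $\bK(\bx)\subset\bE$, so these coefficients live in~$\bE$), which places every row $(\bpa^{\bv}(z_j))_j$ in the row span of the $d\times d$ Wronskian matrix; one should just make explicit that you are invoking the full \emph{if and only if} form of Kolchin's Theorem~1 (all wedge determinants vanish $\Rightarrow$ dependence over $C_\bE$), which is stronger than the one-directional statement quoted in Section~2.4 but is indeed what Kolchin proves. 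In part~(ii), the cofactor expansion along the operator column, the vanishing $P(z_k)=0$ by column duplication, and the comparison of $Q=w_G(\bz)^{-1}P$ with $\HC(g)^{-1}g$ via the nonsingular system with matrix $(\bpa^{\bu_i}(z_j))$ are all sound; the only hypotheses you implicitly use --- that $G$ is reduced, so the tail terms of $g$ are parametric and hence lie in $\{\bpa^{\bu_1},\ldots,\bpa^{\bu_d}\}$ --- are part of the standing assumptions on~$G$. The bookkeeping point you flag (that applying the expanded operator to $z_k$ reproduces the determinant with evaluated last column) is indeed routine, since the minors are scalars in~$\bE$ and act by multiplication.
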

The two propositions listed above will be used to reconstruct a Gr\"obner basis from its solutions.
\section{ Ordinary points and singularities} \label{SECT: ops}
Let $P \in \Dpol$ be in the form
$P = c_{\bu_m} \bpa^{\bu_m} + c_{\bu_{m-1}} \bpa^{\bu_{m-1}} + \cdots + c_{\bu_0} \bpa^{\bu_0},$
where $c_{\bu_0}, \ldots, c_{\bu_m} \in \bK[\bx] \setminus \{0\}$ and $ \bpa^{\bu_0}$ \ldots, $ \bpa^{\bu_ m}$ are distinct.
We say that $P$ is \emph{primitive} if
$\gcd(c_{\bu_0}, c_{\bu_1}, \ldots, c_{\bu_m}) = 1.$

For brevity, a Gr\"{o}bner basis~$G$ in $\Drat$ is said to be \emph{primitive} if it is finite, reduced and its elements are primitive ones in $\Dpol$.
Every nontrivial left ideal in $\Drat$ has a primitive Gr\"obner basis. The goal of this section is to characterize ordinary points
of a primitive Gr\"obner basis of finite rank in terms of formal power series solutions.

\subsection{Ordinary points} \label{SUBSECT:op}

Our definitions of singularities and ordinary points are motivated by the material after~\cite[Lemma 1.4.21]{Saito1999}.

\begin{defn} \label{DEF:op}
Assume that $G$ is a primitive Gr\"obner basis of finite rank.
 A point $ \mathbf{\alpha} \in \overline{\bK}^n$  is called an \emph{ordinary point} of~$G$ if, for every~$p \in \HC(G)$, $p(\mathbf{\alpha}) \neq 0$.
 Otherwise, it is called  a \emph{singularity} of $G$.
\end{defn}

The above definitions are compatible with those in the univariate case~\cite{Abramov2006, Chen2016}.
Note that the origin is an ordinary point of $G$ if and only if each element of $\HC(G)$ has a nonzero constant term.

\begin{ex} \label{EX:op}
Consider the Gr\"{o}bner basis in $\bQ(x_1, x_2)[\pa_1, \pa_2]$
\[
 G = \{\pa_2 - \pa_1, \pa_1^2  + 1 \}.
\]
We find that $\HT(G) = \{\pa_1^2, \pa_2 \}$ and $\HC(G) = \{1 \}$.
So $G$ has no singularity.
\end{ex}

\begin{ex} \label{EX:nop}
Consider the Gr\"{o}bner basis~\cite[Example 3]{Li2002} in $\bQ(x_1, x_2)[\pa_1, \pa_2]$
\[
 G = \{x_1 \pa_1^2  - (x_1 x_2 - 1) \pa_1  - x_2, x_2 \pa_2 - x_1 \pa_1 \}.
\]
In this case, $\HT(G) = \{\pa_1^2, \pa_2 \}$ and $\HC(G) = \{x_1, x_2 \}$ and $\PT(G) = \{1, \pa_1 \}$.
The singularities of $G$ are
$$\{ (a, b) \in \overline{\bQ}^2 \mid a = 0 \text{ or } b = 0 \},$$
which are two lines in~$\overline{\bQ}^2$. In particular, the origin is a singularity.
\end{ex}

\subsection{Characterization of ordinary points} \label{SUBSECT:chop}

From now on, we focus on formal power series solutions of a primitive Gr\"obner basis around the origin,
as a point in $\bK^n$ can always be translated to the origin, and we may assume that $\bK$ is algebraically closed when necessary.
\begin{thm} \label{THM:chop}
Let $G$ be a primitive Gr\"{o}bner basis of finite rank.
Then the origin is an ordinary point of $G$ if and only if
$G$ has $\rank(G)$ many $\bK$-linearly independent formal power series solutions
whose initial exponents are exactly those in $\PE(G)$.
\end{thm}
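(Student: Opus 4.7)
The plan is to prove both directions by relating the solutions to the normal forms $\bpa^\bu\equiv\sum_{\bw\in\PE(G)}r_{\bu,\bw}(\bx)\,\bpa^\bw\pmod{\Drat G}$ (with $r_{\bu,\bw}\in\bK(\bx)$) that describe the $\bK(\bx)$-module $\Drat/\Drat G$. The observation driving both implications is that the standard reduction algorithm only ever introduces powers of elements of $\HC(G)$ into the denominators of the $r_{\bu,\bw}$.

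For ``existence implies ordinariness'', write $\PE(G)=\{\bv_1,\dots,\bv_d\}$ and label the $\bK$-linearly independent power series solutions $f_1,\dots,f_d$ so that $f_i$ has initial exponent $\bv_i$ with $\bv_1\prec\dots\prec\bv_d$. The matrix $M=\bigl[\phi(\bpa^{\bv_j}(f_i))\bigr]_{i,j}$ is then triangular with nonzero diagonal, so after a preliminary $\bK$-linear change of basis I may assume $M=\mathrm{Id}$. In particular $w_G(\bff)\in\bK[[\bx]]$ has constant term $1$ and is invertible as a power series. Proposition~\ref{PROP:wr}(ii), applied to each $g\in G$, then gives
\[
g=\HC(g)\,w_G(\bff)^{-1}\bigl(w_G\wedge\bpa^{\HT(g)}(\bff,\cdot)\bigr),
\]
and reading off the coefficient of $\bpa^\bu$ for each $\bu\in\PE(G)$ expresses the quotient $c^{(g)}_\bu/\HC(g)$ (with $c^{(g)}_\bu\in\bK[\bx]$ the coefficient of $\bpa^\bu$ in $g$) as a $d\times d$ minor of the Wronskian matrix divided by $w_G(\bff)$, hence as a power series. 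Primitivity of $g$ gives $\gcd(\HC(g),c^{(g)}_{\bu_1},\dots,c^{(g)}_{\bu_d})=1$, and Lemma~\ref{LEM:gessel} then forces $\HC(g)$ to have nonzero constant term. Doing this for every $g\in G$ shows that the origin is ordinary.

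For ``ordinariness implies existence'', the observation above ensures that at an ordinary origin every $r_{\bu,\bw}$ lies in $\bK[[\bx]]$. The matrices $A_i$ on $\bK[[\bx]]^d$ with entries $(A_i)_{\bw,\bw'}=r_{\bw+e_i,\bw'}$ then have power series entries and, because they come from left-multiplication by $\pa_i$ on the quotient module $\Drat/\Drat G$, satisfy the integrability conditions forced by $\pa_i\pa_j=\pa_j\pa_i$. By the formal Frobenius theorem, the Pfaffian system $\pa_i\by=A_i\by$ admits, for every $\by(0)\in\bK^d$, a unique power series solution $\by\in\bK[[\bx]]^d$. Assuming $\rank(G)\ge 1$ so that $\mathbf{0}\in\PE(G)$ (the rank-zero case being vacuous), for each $\bv\in\PE(G)$ choose $\by^{(\bv)}(0)=e_\bv$ and set $f_\bv:=y^{(\bv)}_{\mathbf{0}}$. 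A short induction on $|\bu|$, using the recursion $r_{\bu+e_i,\bw'}=\pa_i r_{\bu,\bw'}+\sum_\bw r_{\bu,\bw}\,r_{\bw+e_i,\bw'}$ obtained by left-multiplying the reduction of $\bpa^\bu$ by $\pa_i$, shows that $\bpa^\bu(f_\bv)=\sum_\bw r_{\bu,\bw}\,y^{(\bv)}_\bw$ for every $\bu\in\bN^n$. Substituting into $g=\HC(g)\bpa^{\HT(g)}+\sum c^{(g)}_\bu\bpa^\bu$ and using $\HC(g)\,r_{\HT(g),\bw}=-c^{(g)}_\bw$ yields $g(f_\bv)=0$; and a direct check on the reduction coefficients (only $\bpa^{\bw'}$ with $\bw'\preceq\bu$ can appear in the reduction of $\bpa^\bu$, and strictly only $\bw'\prec\bu$ when $\bu\notin\PE(G)$) shows that the initial exponent of $f_\bv$ is exactly $\bv$. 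The resulting $f_{\bv_1},\dots,f_{\bv_d}$ are then $\bK$-linearly independent power series solutions with initial exponents exactly $\PE(G)$.

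The main obstacle is in the ``$\Rightarrow$'' direction, and consists of two technical ingredients: (i) the bookkeeping showing that the reduction procedure introduces no denominators outside powers of head coefficients, which is a direct tracing of the reduction algorithm; and (ii) the formal Frobenius existence statement, which can either be quoted classically or proved here directly by induction along $\prec$. In the latter case the confluence of the recursions coming from different $g\in G$ whose head terms both divide $\bpa^\bu$ must be justified from the Gr\"obner basis property.
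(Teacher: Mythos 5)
Your proposal is correct, and its two halves relate to the paper's proof in different ways. The direction ``solutions with initial exponents $\PE(G)$ $\Rightarrow$ ordinary'' is essentially the paper's argument verbatim: normalize so that the Wronskian-type matrix is the identity, invoke Proposition~\ref{PROP:wr}(ii) to write $\HC(g)^{-1}g$ with coefficients in $\bK[[\bx]]$, and combine primitivity with Lemma~\ref{LEM:gessel} to force $\HC(g)(\mathbf{0})\neq 0$. The converse direction is where you genuinely diverge. The paper constructs the solutions directly (Wu's method): it defines the coefficients $c_\bv$ by evaluating the normal-form data at the origin via~\eqref{EQ:ansatz} and then proves $G_i(f)=0$ by a Noetherian induction along $\prec$, using Lemma~\ref{LM:zero}; in effect it proves the needed formal existence theorem from scratch inside the Gr\"obner framework. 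You instead pass to the companion (Pfaffian) system $\pa_i\by=A_i\by$ on the parametric basis, observe that ordinariness makes the connection matrices regular at the origin (since reduction only introduces head coefficients into denominators, a fact the paper also uses when it writes $\ell_\bv N_\bv\in\Dpol$), and quote the formal Frobenius theorem, with integrability supplied by uniqueness of normal forms in $\Drat/\Drat G$; your subsequent bookkeeping ($\bpa^{\bu}(f_\bv)=\sum_{\bw}r_{\bu,\bw}y^{(\bv)}_{\bw}$, $g(f_\bv)=0$ via $\HC(g)r_{\HT(g),\bw}=-c^{(g)}_{\bw}$, and the initial-exponent check from $r_{\bu,\bv}(\mathbf{0})$) is sound, including the rank-zero caveat. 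What your route buys is brevity and contact with the classical theory of integrable connections at a regular point; what the paper's route buys is self-containedness (no external Frobenius citation) plus an explicit coefficient recursion~\eqref{EQ:ansatz} that is reused later, e.g.\ in Algorithm~\ref{ALGO:fpsappsin}. As you yourself note, if one insists on proving the formal Frobenius step rather than citing it, the confluence argument via the Gr\"obner basis property collapses your proof back into essentially the paper's induction, so the difference is one of packaging rather than substance.
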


%
%
%
%
%
%
%

\begin{proof}
Let~$G=\{G_1, \ldots, G_k\}$, $\bpa^{\bv_i} = \HT(G_i)$ and~$\ell_i = \HC(G_i)$, $i=1,\ldots,k$.
Then we can write
\[ G_i = \ell_i \bpa^{\bv_i} + \text{a linear combination of parametric terms over $\bK[\bx]$}  \]

\noindent
\emph{Necessity.} Assume that the origin is an ordinary point of~$G$. Then none of the $\ell_i$'s vanishes at the origin.
We show how to construct formal power series solutions of $G$ by an approach described in~\cite{Wu1989}.

We associate to each tuple $\bu \in \PE(G)$ an arbitrary constant $c_{\bu} \in \bK$.
For a non-parametric term $\bpa^{\bv}$, let
$N_\bv$ be its normal form with respect to~$G$.
Although $N_{\bv}$ belongs to~$\Drat$,
there exists a power product $\ell_{\bv}$ of $\ell_1,\ldots,\ell_k$ such that $\ell_\bv N_{\bv} \in \Dpol$. Write
\[ \ell_\bv(\bx) N_{\bv} = \sum_{\bu \in \PE(G)} a_{\bu, \bv}(\bx) \bpa^\bu \]
with $a_{\bu, \bv} \in \bK[\bx]$. Set
\begin{equation} \label{EQ:ansatz}
c_\bv =  \ell_\bv(\mathbf{0})^{-1} \sum_{\bu \in \PE(G)} a_{\bu, \bv}(\mathbf{0}) c_\bu.
\end{equation}
Note that $\ell_\bv$ can be chosen to be any power product of $\ell_1, \ldots, \ell_k$
such that~$\ell_\bv N_\bv$ belongs to~$\Dpol$.
Let
\[ f = \sum_{\bu \in \bN^n} \frac{c_{\bu}}{\bu !} \bx^{\bu}. \]

 We claim that $f$ is a formal power series solution of~$G$, that is,
 \begin{equation} \label{EQ:claim1}
 G_i(f)=0, \quad i = 1, \ldots, k.
 \end{equation}
 By~\eqref{EQ:comm} and Lemma~\ref{LM:zero}, it suffices to prove

 \begin{equation} \label{EQ:claim2}
 \phi\left( \bpa^{\bu} G_i (f) \right) = 0
 \end{equation}
 for all $\bu \in \bN^n$ and $i \in \{1, \ldots, k\}$.
 We proceed by Noetherian induction on the term order~$\prec$.

 Starting with $\bpa^{\mathbf{0}}$, we can write
 \begin{equation} \label{EQ:initialcase}
  \bpa^{\mathbf{0}}  G_i   = G_i = \ell_i(\bx) \bpa^{\bv_i}- \sum_{\bu \in \PE(G)} a_{\bu, \bv_i}(\bx) \bpa^\bu,
  \end{equation}
 where $a_{\bu, \bv_i} \in \bK[\bx]$. It follows that
 \[ \ell_i(\bx) N_{\bv_i} =  \sum_{\bu \in \PE(G)} a_{\bu, \bv_i}(\bx) \bpa^\bu. \]
 By \eqref{EQ:ansatz},
 \[ \ell_i(\mathbf{0})  c_{\bv_i} - \sum_{\bu \in \PE(G)} a_{\bu, \bv_i}(\mathbf{0}) c_{\bu} = 0, \]
 which can be rewritten as
 \[ \phi(\ell_i(\bx)) \phi(\bpa^{\bv_i}(f)) -  \sum_{\bu \in \PE(G)} \phi( a_{\bu, \bv_i}(\bx)) \phi(\bpa^{\bu}(f)) = 0. \]
 Since $\phi$ is a ring homomorphism, we have
\[  \phi\left(\ell_i(\bx) \bpa^{\bv_i}(f) -  \sum_{\bu \in \PT(G)}  a_{\bu,\bv_i}(\bx) \bpa^{\bu}(f)\right) = 0. \]
We see that $\phi(G_i(f))=0$ by~\eqref{EQ:initialcase}.

Assume that $\bpa^\bv$ is a term higher than $\bpa^{\mathbf{0}}$ and, for all $\bw$ with~$\bw \prec  \bv$ and all $i \in \{1, \ldots, k\}$,
\[  \phi(\bpa^\bw G_i(f)) = 0. \]
Reducing $\bpa^{\bv + \bv_i}$ modulo~$G$, we have
\[  \ell(\bx) \bpa^{\bv + \bv_i} = p_\bv(\bx) (\bpa^\bv G_i) + \left( \sum_{\bw \prec \bv} \sum_{s=1}^k p_{\bw, s}(\bx) (\bpa^\bw G_s) \right) + \ell(\bx) N_{\bv + \bv_i}, \]
where $\ell(\bx)$ and $p_\bv(\bx)$ are two power products of $\ell_1(\bx), \ldots, \ell_k(\bx)$,
and $p_{\bw, s}(\bx)$ belongs to $\bK[\bx]$ for all $\bw \prec \bv$ and~$s \in \{1, \ldots, k\}$.
Moreover, $\ell(\bx) N_{\bv + \bv_i}$ belongs to~$\Dpol$.
Applying the above equality to~$f$, we get
\[  \ell(\bx) \bpa^{\bv + \bv_i}(f) = p_\bv(\bx) (\bpa^\bv G_i)(f) + \left( \sum_{\bw \prec \bv} \sum_{s=1}^k p_{\bw, s}(\bx) (\bpa^\bw G_s)(f) \right) + \ell(\bx) N_{\bv + \bv_i}(f). \]
Applying $\phi$ to the above equality yields
\[
\begin{array}{rcl}
  \phi \left( \ell(\bx) \bpa^{\bv + \bv_i} (f) \right) & =  & p_\bv(\mathbf{0}) \phi(\bpa^\bv G_i(f)) + \sum_{\bw \prec \bv} \sum_{s=1}^k p_{\bw, s}(\mathbf{0}) \phi(\bpa^\bw G_s(f)) \\ \\
  &  & +  \phi\left( \left(\ell(\bx) N_{\bv + \bv_i}\right)(f)\right).
\end{array}
\]
By the induction hypothesis, $\phi(\bpa^\bw G_s(f))=0$ for all $\bw$ with $\bw \prec \bv$ and for all $s \in \{1, \ldots, k\}$. Thus,
\[ \phi\left(\ell(\bx) \bpa^{\bv + \bv_i} (f)\right) = p_\bv(\mathbf{0}) \phi(\bpa^\bv G_i(f))  + \phi\left( \left(\ell(\bx) N_{\bv + \bv_i}\right)(f)\right). \]
Writing~$\ell(\bx) N_{\bv + \bv_i} = \sum_{\bu \in \PE(G)} a_{\bu, \bv + \bv_i}(\bx) \bpa^\bu$ with  $a_{\bu, \bv + \bv_i}(\bx) \in \bK[\bx]$, we see that the above equality implies
\[ \ell(\mathbf{0}) c_{\bv + \bv_i} = p_\bv(\mathbf{0}) \phi(\bpa^\bv G_i(f))  +  \sum_{\bu \in \PE(G)} a_{\bu, \bv + \bv_i}(\mathbf{0}) c_{\bu}. \]
It follows from \eqref{EQ:ansatz} that
\[ p_\bv(\mathbf{0}) \phi(\bpa^\bv G_i(f)) = 0. \]
Since $p_\bv(\mathbf{0})$ is nonzero, $\phi(\bpa^\bv G_i(f))$ is equal to zero. This proves~\eqref{EQ:claim2}.
Therefore, our claim~\eqref{EQ:claim1} holds.
Since there are $\rank(G)$ many parametric terms,
the D-finite system $G$ has $\rank(G)$ many~$\bK$-linearly
independent formal power series solutions with initial exponents in $\PE(G)$.




\medskip \noindent
\emph{Sufficiency.}  Let~$d=\rank(G)$.
Assume that $f_1, \ldots, f_d$
are $\bK$-linearly independently formal power series solutions of $G$ whose
the initial exponents are $\bu_1$, \ldots, $\bu_d$, respectively.
Without loss of generality, we assume further that, for all~$j, m \in \{1, \ldots, d\}$,
\[ \phi(\bpa^{\bu_m}(f_j)) = \delta_{mj},\]
where~$\delta_{mj}$ stands for Kronecker's symbol.

Let $\bff = (f_1, \ldots, f_d)$.
By the above assumption, the constant term of $w_G(\bff)$ is nonzero.
So the formal power series $w_G(\bff)$ is invertible in $\bK[[\bx]]$.

Let $F_i = (w_L \wedge \bpa^{\bv_i})(\bff, \cdot)$. By Proposition~\ref{PROP:wr},
\begin{equation} \label{EQ:wronrep}
\frac{1}{\ell_i} G_i = w_G(\bff)^{-1} F_i \in \bK[[\bx]][\bpa].
\end{equation}
Since $G_i$ is primitive, we can write $G_i$ as
\[
 \ell_i \bpa^{\bv_i} + \sum_{j = 1}^d \ell_{ij} \bpa^{\bu_j},
\]
where $\ell_{ij} \in \bK[\bx]$ and $\gcd(\ell_i, \ell_{i1}, \ldots, \ell_{id}) = 1$.
By~\eqref{EQ:wronrep}, we have
$$\frac{\ell_{ij}}{\ell_i} \in \bK[[\bx]] \quad \text{ for each}  \quad j = 1, \ldots, d.$$
It follows from Lemma~\ref{LEM:gessel} that
the constant term of $\ell_i$ is nonzero. Hence, the origin is an ordinary point of $G$.
\end{proof}

The proof for the necessity of the above theorem also holds for an arbitrary
left (not necessarily D-finite) ideal $\Drat G$,
provided that the origin is an ordinary point of $G$. In addition,
the above theorem also holds when the fixed ordering $\prec$ is not graded. But the results
in the next section hinge on the assumption that $\prec$ is graded.

\section{Apparent singularities}

The goal of this section is to define apparent singularities of a primitive Gr\"obner basis of finite rank, and to characterize them.

\begin{defn} \label{DEF:appsin}
Let $G$ be a primitive Gr\"obner basis of rank $d$.
\begin{itemize}
 \item [(i)] Assume that the origin is a singularity of~$G$.
 We call the origin an \emph{apparent singularity} of $G$ if
 $G$ has $d$ linearly independent formal power series solutions over $\bK$.
 \item[(ii)] Assume that $M$ is a primitive Gr\"obner basis of finite rank.
 We call~$M$ a \emph{left multiple} of $G$ if
 $$\Drat M \subset \Drat G.$$
\end{itemize}
\end{defn}

The above definition is compatible with the univariate case~\cite[Definition 5]{Abramov2006}.

\begin{ex} \label{EX:appsin1}
The solution space $\sol_{\bE}(G)$ of the Gr\"obner basis
\[
 G = \{x_2 \pa_2 + \pa_1 - x_2 - 1, \pa_1^2 - \pa_1 \}
\]
in $\bK(x_1, x_2)[\pa_1, \pa_2]$ is generated by $\{\exp(x_1 + x_2), x_2 \exp(x_2)\}$.
In this case,
$$\HT(G) = \{\pa_2, \pa_1^2 \}, \quad \HC(G) = \{x_2, 1 \} \quad \text{ and } \quad \PE(G) = \{(0,0), (1,0)\}.$$
Therefore,
the origin is a singularity of $G$.  As $G$ has two $\bK$-linearly independent
formal power series solutions, the origin is an apparent singularity of~$G$.

Let $M$ be another Gr\"{o}bner basis such that
\[
 \Drat \cdot M = \Drat G \cap \Drat \cdot \{x_1 \pa_1 - 1, \pa_2 \}.
\]
We find that $M$ is a left multiple of~$G$ with rank $3$.
\end{ex}

\begin{ex} \label{EX:appsin2}
The solution space $\sol_{\bE}(G)$ of the Gr\"obner basis
\[
 G = \{x_2^2 \pa_2 - x_1^2 \pa_1 + x_1 - x_2, \pa_1^2 \}
 \]
in $\bK(x_1, x_2)[\pa_1, \pa_2]$ is generated by $\{x_1 + x_2, x_1 x_2\}$.
In this case,
$$\HT(G) = \{\pa_2, \pa_1^2 \}, \HC(G) = \{x_2^2, 1 \} \text{ and } \PE(G) = \{(0,0),(1,0)\}.$$
Therefore,
the origin is an apparent singularity of $G$.

Set
$$S = \{ (0, 0), (0, 1), (2, 0), (0, 2) \}.$$
Let $M$ be another Gr\"{o}bner basis with
\[
 \Drat \cdot M = \Drat G \cap \left(\bigcap_{(s, t) \in S} \Drat \cdot \{ x_1 \pa_1 - s, x_2 \pa_2 - t \} \right)
\]
We find that $rank(M) = 6$.
By Definition~\ref{DEF:appsin} (ii), $M$ is a left multiple of~$G$ with~$\rank(M)=6$.
\end{ex}

For a subset~$S$ of~$\Drat$, we denote by $\IE_\bzero(S)$ the set of initial exponents of nonzero elements in $\sol_\bE(S) \cap \bK[[\bx]]$ and call it the set of initial exponents of
$S$ at the origin. Then $|\IE_\bzero(S)|$ is the dimension of~$\sol_\bE(S) \cap \bK[[\bx]]$, because any set of formal power series with distinct initial exponents are linearly independent over~$\bK$.
For a primitive Gr\"obner basis~$G$,
the origin is an ordinary point of~$G$ if and only if $\IE_\bzero(G)=\PE(G)$ by Theorem~\ref{THM:chop}, it is an apparent singularity if and only if~$\IE_\bzero(G) \neq \PE(G)$ but $|\IE_\bzero(G)| = |\PE(G)|$ by Definition~\ref{DEF:appsin}.

Before characterizing apparent singularities, we prove two lemmas.
The results in the first lemma are likely known, but we were not able to find proper references containing them.
\begin{lemma} \label{LEM:exact}
Let $I$ and $J$ be $D$-finite ideals in $\Drat$. Then
\begin{itemize}
\item [(i)] $\rank(I \cap J) + \rank(I + J) = \rank(I) + \rank(J).$
\item [(ii)] $\dim \sol_\bE(I \cap J) + \dim \sol_\bE(I + J) = \dim \sol_\bE(I) + \dim \sol_\bE(J).$
\item[(iii)] $\sol_\bE(I \cap J) = \sol_\bE(I) + \sol_\bE(J)$.
\end{itemize}
\end{lemma}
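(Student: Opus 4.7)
The plan is to prove (i) via a standard short exact sequence, derive (ii) from (i) using Proposition~\ref{PROP:wn}, and then get (iii) by dimension counting from (ii) together with one elementary set-theoretic identity.

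For (i), I would write down the short exact sequence of left $\bK(\bx)$-vector spaces
\[ 0 \;\longrightarrow\; \Drat/(I\cap J) \;\xrightarrow{\;\varphi\;}\; \Drat/I \;\oplus\; \Drat/J \;\xrightarrow{\;\psi\;}\; \Drat/(I+J) \;\longrightarrow\; 0, \]
where $\varphi(P+(I\cap J)) = (P+I,\,P+J)$ and $\psi(P+I,\,Q+J) = (P-Q)+(I+J)$. Injectivity of $\varphi$ and surjectivity of $\psi$ are immediate; exactness in the middle is the routine statement that $P-Q\in I+J$ forces $(P+I,Q+J)$ to arise from a common representative. Taking $\bK(\bx)$-dimensions on both sides gives (i). All four ideals involved are automatically D-finite: since $I+J\supseteq I$, the quotient $\Drat/(I+J)$ is a quotient of the finite dimensional space $\Drat/I$, and the exact sequence then bounds $\rank(I\cap J)\le\rank(I)+\rank(J)$. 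Part (ii) is an immediate consequence: Proposition~\ref{PROP:wn} gives $\dim_{C_\bE}\sol_\bE(K)=\rank(K)$ for every D-finite ideal $K$, so applying it to each of $I$, $J$, $I\cap J$, $I+J$ translates (i) into (ii).

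For (iii), the containment $\sol_\bE(I)+\sol_\bE(J)\subseteq\sol_\bE(I\cap J)$ is trivial, since every operator in $I\cap J$ lies in both $I$ and $J$. For the reverse direction I would use dimension counting. The identity $\sol_\bE(I)\cap\sol_\bE(J)=\sol_\bE(I+J)$ is immediate from the definitions, and then the standard inclusion-exclusion formula for $C_\bE$-subspaces of $\bE$ gives
\[
\dim\bigl(\sol_\bE(I)+\sol_\bE(J)\bigr)
=\dim\sol_\bE(I)+\dim\sol_\bE(J)-\dim\sol_\bE(I+J)
=\dim\sol_\bE(I\cap J),
\]
where the last equality uses (ii). Since both sides of the original containment are finite dimensional $C_\bE$-subspaces of $\bE$ of the same dimension, they coincide, establishing (iii).

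I do not foresee any serious obstacle; the argument is essentially formal. The one point that needs attention is keeping the two ground rings straight: ranks are $\bK(\bx)$-dimensions of quotients of $\Drat$, whereas solution-space dimensions live over the constant field $C_\bE\subset\bE$. Proposition~\ref{PROP:wn} is precisely the bridge between them, so once it is invoked the entire lemma reduces to the short exact sequence above plus one dimension identity from linear algebra.
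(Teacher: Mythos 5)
Your proposal is correct and follows essentially the same route as the paper: the short exact sequence $0 \rightarrow \Drat/(I\cap J) \rightarrow \Drat/I \times \Drat/J \rightarrow \Drat/(I+J) \rightarrow 0$ for (i), Proposition~\ref{PROP:wn} to translate ranks into solution-space dimensions for (ii), and the dimension count using $\sol_\bE(I)\cap\sol_\bE(J)=\sol_\bE(I+J)$ for (iii). Your added remark that $I+J$ and $I\cap J$ are automatically D-finite is a small, worthwhile clarification but does not change the argument.
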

\begin{proof}
Let~$V$ be a vector space over any field, and let $U$ and $ W$ be two subspaces of~$V$.
Set
\[
\begin{array}{llll}
 \psi: & V \slash (U \cap W) & \rightarrow & V \slash U \times V \slash W \\
       &        v + U \cap W & \mapsto     & (v + U, - v + W),
\end{array}
\]
and
\[
\begin{array}{llll}
 \phi: & V \slash U \times V \slash W & \rightarrow & V \slash (U + W) \\
       &               (a + U, b + W) & \mapsto     & a + b + (U + W).
\end{array}
\]
It is straightforward to verify that the following sequence is exact\footnote{We thank Professor Yang Han for bring this exact sequence to our attention, which shortens our original proof.}.
\[
  0 \rightarrow V \slash (U \cap W) \xrightarrow{\psi} V \slash U \times V \slash W \xrightarrow{\phi} V \slash (U + W) \rightarrow 0.\]
It follows that  $V \slash U \times V \slash W$ is linearly isomorphic to $V \slash (U \cap W) \oplus V \slash (U + W).$ In particular,
\[ \dim(V/(U\cap W)) + \dim(V/(U+W)) = \dim(V/U) + \dim(V/W). \]

Setting~$V = \Drat$, $U=I$ and~$W=J$, we prove the first assertion.
The second assertion follows from the first one and Proposition~\ref{PROP:wn}.

For the last assertion, it is evident that $\sol_\bE(I) + \sol_\bE(J) \subset \sol_\bE(I \cap J)$. On the other hand,
\[ \begin{array}{ccl}
\dim (\sol_\bE(I) + \sol_\bE(J)) & = &  \dim (\sol_\bE(I)) + \dim( \sol_\bE(J)) - \dim (\sol_\bE(I) \cap \sol_\bE(J)). \\ \\
 & = & \dim (\sol_\bE(I)) + \dim( \sol_\bE(J)) - \dim (\sol_\bE(I+J)) \\ \\
 &   & \text{(since $\sol_\bE(I) \cap \sol_\bE(J) = \sol_\bE(I+J)$)} \\ \\
 & = & \dim (\sol_\bE(I \cap J)) \quad \text{(by the second assertion).}
\end{array}
\]
Hence, $\sol_\bE(I \cap J) = \sol_\bE(I) + \sol_\bE(J)$.
\end{proof}

As a matter of notation, we define $\bN_m^n =\{ \bu \in \bN \mid |\bu| =m \}$ for $m \in \bN$.
The second lemma illustrates a connection between parametric exponents and initial ones.

\begin{lemma} \label{LEM:wronskianrep}
Let $M$ be a primitive Gr\"{o}bner basis.
Assume that $\sol_{\bE}(M)$ has a basis in $\bK[[\bx]]$ and $\IE_\bzero(M) = \bN^n_m$ for some $m \in \bN$.
Then $\IE_\bzero(M)=\PE(M)$. Consequently, the origin is an ordinary point of $M$.
\end{lemma}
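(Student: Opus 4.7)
The plan is to mimic the sufficiency direction of the proof of Theorem~\ref{THM:chop}, using the extra information that all initial exponents lie in the single total-degree slice $\bN^n_m$. First, a dimension count: the assumption that $\sol_\bE(M)$ has a basis in $\bK[[\bx]]$ gives $\dim_\bK(\sol_\bE(M)\cap\bK[[\bx]])=\rank(M)=|\PE(M)|$, while the text preceding the lemma records $|\IE_\bzero(M)|=\dim_\bK(\sol_\bE(M)\cap\bK[[\bx]])$. Hence $|\PE(M)|=|\IE_\bzero(M)|=|\bN^n_m|=:d$.

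Next, I would produce a convenient $\bK$-basis $f_1,\ldots,f_d$ of $\sol_\bE(M)\cap\bK[[\bx]]$ whose initial exponents enumerate $\bN^n_m=\{\bu_1,\ldots,\bu_d\}$. Because every nonzero series in the solution space has initial exponent in $\bN^n_m$, the degree-$m$ homogeneous parts of the $f_i$ are $\bK$-linearly independent in $\bK[\bx]_{=m}$ (any nontrivial dependence would yield a nonzero solution with initial degree $>m$, violating the hypothesis). Gaussian elimination on these degree-$m$ parts lets me normalize so that $\phi(\bpa^{\bu_j}(f_i))=\delta_{ij}$, and in particular $\phi(\bpa^{\bv}(f_i))=0$ whenever $|\bv|<m$.

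With this normalization I would then invoke Proposition~\ref{PROP:wr}(ii) for each $G_i\in M$, giving $\HC(G_i)^{-1}G_i=w_M(\bff)^{-1}F_i$ in $\bE[\bpa]$. Writing $G_i=\ell_i\bpa^{\bv_i}+\sum_{\bu\in\PE(M)}\ell_{i,\bu}\bpa^{\bu}$ with $\gcd(\ell_i,\{\ell_{i,\bu}\})=1$ by primitivity, the Wronskian identity forces each $\ell_{i,\bu}/\ell_i$ to lie in $\bK[[\bx]]$ (provided $w_M(\bff)$ is a unit in $\bK[[\bx]]$), so Lemma~\ref{LEM:gessel} yields $\ell_i(\mathbf{0})\neq0$. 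Thus the origin is an ordinary point of $M$, and Theorem~\ref{THM:chop} then delivers the equality $\IE_\bzero(M)=\PE(M)$.

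The main obstacle is showing $\phi(w_M(\bff))\neq0$, equivalently that $w_M(\bff)$ is invertible in $\bK[[\bx]]$. The determinantal entries at the origin are $\phi(\bpa^{\bv}(f_i))$ for $\bv\in\PE(M)$; by the normalization, any row indexed by $\bv\in\PE(M)$ with $|\bv|<m$ is identically zero. So the crucial combinatorial input is to rule out parametric exponents of degree strictly below $m$: one must leverage the fact that $\IE_\bzero(M)=\bN^n_m$ exactly (not merely $|\IE_\bzero(M)|=|\bN^n_m|$), together with the downward closure of $\PE(M)$ under divisibility and the cardinality match $|\PE(M)|=|\bN^n_m|$, to force $\PE(M)=\bN^n_m$. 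Once this is established, the matrix $(\phi(\bpa^{\bu_k}(f_i)))_{k,i}$ is a permutation matrix (up to signs and factorials), so its determinant is nonzero, and the argument closes.
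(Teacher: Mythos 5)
There is a genuine gap at exactly the point you flag as ``the main obstacle''. Your plan reduces everything to showing $\PE(M)=\bN^n_m$, and you propose to deduce this from the downward closure of $\PE(M)$ under divisibility together with the cardinality match $|\PE(M)|=|\bN^n_m|$. That is not enough: a divisibility-closed set of the right cardinality need not be $\bN^n_m$. For instance with $n=2$ and $m=1$, a Gr\"obner basis whose head terms are $\pa_2$ and $\pa_1^3$ has $\PE(M)=\{(0,0),(1,0),(2,0)\}$, which is downward closed and has cardinality $|\bN^2_1|=3$ but is not $\bN^2_1$. So purely combinatorial information about the staircase cannot force the conclusion; one must use the solutions to constrain the ideal. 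This is precisely what the paper's proof supplies: for every $\bv$ with $|\bv|=m+1$ it forms the bordered Wronskian $F_\bv=(w\wedge\bpa^\bv)(\bff,\cdot)$ with $w=\bigwedge_{\bu\in\IE_\bzero(M)}\bpa^\bu$; since $w(\bff)\neq 0$ and $\prec$ is graded, $\HT(F_\bv)=\bpa^\bv$, and since $F_\bv$ annihilates all of $\sol_\bE(M)$, Proposition~\ref{PROP:dnull} places $F_\bv$ in the extended ideal $\bE[\bpa]\cdot M$, so $\bpa^\bv$ is reducible by $\HT(M)$ and hence not parametric. This yields $\PE(M)\subseteq\bN^n_m$, and only then does the cardinality count give equality. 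That Nullstellensatz step is the missing idea in your argument.

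A secondary issue is your reading of $\bN^n_m$ as the single slice $\{\bu:|\bu|=m\}$ (degree-$m$ homogeneous parts, ``parametric exponents of degree strictly below $m$''). Despite the typo in the paper's displayed definition, the intended meaning---as the examples and the proof of Theorem~\ref{THM:rmappsin} show---is $\{\bu:|\bu|\le m\}$; under the slice reading the conclusion $\PE(M)=\bN^n_m$ would be impossible for $m\ge 1$, since $\PE(M)$ always contains $\bzero$, and what actually has to be excluded are parametric exponents of degree greater than $m$, not smaller. Finally, once $\PE(M)=\IE_\bzero(M)$ is known, you do not need to rerun the Wronskian/Gessel argument: the sufficiency direction of Theorem~\ref{THM:chop} applies verbatim, since $M$ then has $\rank(M)$ independent power series solutions whose initial exponents are exactly $\PE(M)$, and this is how the paper concludes that the origin is an ordinary point.
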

\begin{proof}
Assume that~$f_1, \ldots f_\ell \in \bK[[\bx]]$ form a basis of $\sol_\bE(M)$ and their initial exponents are distinct.
Then $\ell = |\bN_m^n|$.
Let $\bff = (f_1, \ldots, f_{\ell} )$.
And set
\[  w  = \bigwedge_{\bu \in \IE_\bzero(M)}  \bpa^{\bu}. \]
Then~$w(\bff)$ is a nonzero element in $\bK[[\bx]]$.
For every~$\bv \in \bN^n$ with~$|\bv|=m+1$,
let $F_\bv = (w_M \wedge \bpa^\bv)(\bff, \cdot)$, which belongs to~$\bK[[\bx]][\bpa]$.
Then $\HT(F_\bv) = \bpa^\bv$ because $w(\bff)$ is nonzero  and the ordering~$\prec$ is graded.
Since~$F_\bv$ annihilates $f_1, \ldots, f_\ell$, it vanishes
on $\sol_\bE(M)$. It follows from Proposition~\ref{PROP:dnull} that $F_\bv$ belongs to the extended ideal $\bE[\bpa]\cdot M$,
in which~$M$ is still a Gr\"{o}bner basis. Thus, $F_\bv$ can be reduced to zero by~$M$. Accordingly, $\bpa^\bv$ is not
a parametric derivative of~$M$. In other words, $\PE(M)$ is a subset of~$\bN^n_{m}$. Hence,~$\PE(M)=\bN^n_{m}$ because $|\PE(M)|=\ell$ and $\ell= |\bN^n_m|$.
The origin is an ordinary point by Theorem~\ref{THM:chop}.
\end{proof}

\begin{thm} \label{THM:rmappsin}
Let~$G$ be a primitive Gr\"obner basis of rank~$d$. Assume that the origin is a singularity of $G$.
Then the origin is an apparent singularity of~$G$ if and only if it is an ordinary point of some left multiple of~$G$.
\end{thm}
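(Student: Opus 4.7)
The plan is to relate the origin's being an apparent singularity of $G$ to the existence of a completed left ideal whose solution space has a full power series basis. First, for the \emph{sufficiency} direction, suppose $M$ is a left multiple of $G$ having the origin as an ordinary point. Then $\sol_\bE(G)\subseteq\sol_\bE(M)$, and by Theorem~\ref{THM:chop} applied to $M$ there exist $\bK$-linearly independent power series $f_1,\ldots,f_r\in\bK[[\bx]]$ (with $r=\rank(M)$) forming a $C_\bE$-basis of $\sol_\bE(M)$. An element $z\in\sol_\bE(G)$, written as $z=\sum_ic_if_i$ with $c_i\in C_\bE$, lies in $\sol_\bE(G)$ precisely when $\sum_ic_iG_j(f_i)=0$ for every $G_j\in G$, so the admissible tuples $(c_i)$ form a $d$-dimensional $C_\bE$-subspace $H\subseteq C_\bE^r$. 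Because $G$ is primitive, each $G_j\in\Dpol$ sends $\bK[[\bx]]$ into $\bK[[\bx]]$, and comparing Taylor coefficients realizes $H$ as the $C_\bE$-kernel of a matrix whose entries lie in $\bK$. Flatness of $\bK\hookrightarrow C_\bE$ then forces the corresponding $\bK$-kernel $H_\bK\subseteq\bK^r$ to have the same dimension~$d$, and a $\bK$-basis $(c_i^{(k)})_{k=1}^d$ of $H_\bK$ yields $d$ linearly independent power series solutions $h_k=\sum_ic_i^{(k)}f_i$ of $G$, so that the origin is an apparent singularity.

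For \emph{necessity}, suppose the origin is an apparent singularity of $G$ and pick $\bK$-linearly independent power series solutions $g_1,\ldots,g_d\in\bK[[\bx]]$ with distinct initial exponents $\bu_1,\ldots,\bu_d$. Fix $m\in\bN$ with $|\bu_i|\le m$ for all $i$, put $T=\{\bu\in\bN^n:|\bu|\le m\}$, and set $S=T\setminus\{\bu_1,\ldots,\bu_d\}$. For each $\bu\in S$ the rank-one ideal $J_\bu:=\Drat\{x_j\pa_j-u_j:1\le j\le n\}$ has $\sol_\bE(J_\bu)=C_\bE\bx^\bu$. Let $M$ be the primitive Gr\"obner basis of
\[
\Drat M \;=\; \Drat G \,\cap\, \bigcap_{\bu\in S} J_\bu,
\]
which is a left multiple of $G$ by construction. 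Iterating Lemma~\ref{LEM:exact}(iii) gives $\sol_\bE(M)=\sol_\bE(G)+\sum_{\bu\in S}C_\bE\bx^\bu$. The $|T|$ power series $\{g_i\}\cup\{\bx^\bu\}_{\bu\in S}$ have pairwise distinct initial exponents in $T$, hence are $\bK$-linearly independent and, by Kolchin's Wronskian criterion, also $C_\bE$-linearly independent. Therefore $\rank(M)=|T|$ and $\IE_\bzero(M)=T$.

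To finish by invoking Theorem~\ref{THM:chop}, it remains to establish $\PE(M)=T$. After a Gauss--Jordan normalization on the basis above we may assume $\bff=(f_\bv)_{\bv\in T}$ satisfies $\phi(\bpa^\bw(f_\bv))=\delta_{\bw\bv}$ for all $\bw,\bv\in T$, so that the Wronskian $w(\bff)=\det(\bpa^\bu(f_\bv))_{\bu,\bv\in T}$ has constant term $1$ and is a unit in $\bK[[\bx]]$. For every $\bv\in\bN^n$ with $|\bv|>m$, the extended cofactor $F_\bv:=(w\wedge\bpa^\bv)(\bff,\cdot)\in\bE[\bpa]$ has head term $\bpa^\bv$ (since $\prec$ is graded and $|\bu|\le m<|\bv|$ for $\bu\in T$), annihilates each $f_\bv$, and hence vanishes on all of $\sol_\bE(M)$. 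Proposition~\ref{PROP:dnull} places $F_\bv$ inside $\bE[\bpa]\cdot M$, so $\bpa^\bv$ is divisible by the head term of some element of $M$ and is non-parametric. This gives $\PE(M)\subseteq T$, and equality follows from the rank count $|\PE(M)|=\rank(M)=|T|$.

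The hardest part will be the necessity direction: building $M$ explicitly, controlling $\rank(M)$ through an inductive use of Lemma~\ref{LEM:exact}(iii), and then aligning the staircase $\PE(M)$ with the downward-closed set $\IE_\bzero(M)=T$ by means of the Wronskian cofactor argument above, which extends the one in Lemma~\ref{LEM:wronskianrep} from the level set $\bN^n_m$ to the filled set $\{\bu\in\bN^n:|\bu|\le m\}$. Sufficiency is lighter, its only subtle point being the descent of $C_\bE$-linear dependence among the $f_i$ to $\bK$-linear dependence using flatness of $\bK\hookrightarrow C_\bE$.
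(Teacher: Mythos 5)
Your proposal is correct and follows essentially the same route as the paper: for one direction you intersect $\Drat G$ with the ideals $\Drat\{x_1\pa_1-u_1,\ldots,x_n\pa_n-u_n\}$ for the missing exponents of total degree at most $m$ and verify that the resulting primitive Gr\"obner basis has the origin as an ordinary point by the Wronskian--cofactor argument, which is exactly the content of Lemma~\ref{LEM:wronskianrep} (re-derived inline, with all orders $|\bv|>m$ handled at once instead of $|\bv|=m+1$ plus divisibility), and for the other direction you descend the $C_\bE$-kernel of the coefficient matrix $A$ to $\bK$ just as the paper does, merely phrasing the rank argument via flatness. No gaps; the differences from the paper's proof are cosmetic.
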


\begin{proof} \emph{Sufficiency.} Assume that the origin is an apparent singularity of~$G$.

Set~$m = \max_{\bu \in \IE_0(G)} |\bu|.$  For every~$\bv =(v_1, \ldots, v_n) \in \bN^n$,
we denote by~$I_\bv$ the left ideal generated by~$x_1 \pa_1 - v_1,$ \ldots, $x_n \pa_n - v_n$ in~$\Drat$.
Note that the solution space of~$I_\bv$ is spanned by~$\bx^\bv$.  Set
\[  I = \Drat G \quad \text{and} \quad J = \bigcap_{\bv \in \bN^n_m \setminus \IE_0(G)} I_\bv.\]
Then the two left ideals $I$ and $J$ have no solution in common except the trivial one  because $\bv \in \bN^n_m \setminus \IE_0(G)$.
It follows from Lemma~\ref{LEM:exact} (iii) that
$$\sol_\bE(I \cap J) = \sol_\bE(I) \oplus \sol_\bE(J).$$ In particular, $\sol_\bE(I\cap J)$
has dimension $|\bN_m^n|$ over~$C_\bE$, because $\sol_\bE(I)$ and~$\sol_\bE(J)$ have dimensions $|\IE_0(G)|$ and~$|\bN_m^n|-|\IE_0(G)|$,
respectively. So~$\IE_0(I\cap J)=\bN^n_m$. Let~$M$ be a primitive Gr\"obner basis of~$I \cap J$. Then the origin is an ordinary point
of~$M$ by Lemma~\ref{LEM:wronskianrep}.

\medskip \noindent
\emph{Necessity.} Assume that $M$ is a left multiple of~$G$ and that the origin is an ordinary point of~$M$. Then we have that $\sol_{\bE}(G) \subset \sol_{\bE}(M)$.
We need to prove that $\sol_{\bE}(G)$ has a basis in $\bK[[\bx]]$.

Assume that $\{f_1, \ldots, f_\ell\} \subset \bK[[\bx]]$ is a basis of $\sol_\bE(M)$.
Since $\sol_{\bE}(G)$ is contained in~$\sol_{\bE}(M)$, every element of $\sol_{\bE}(G)$ is a linear combination of  $f_1, \ldots, f_\ell$ over $\bC_{\bE}$.
Assume that $f = z_1 f_1 + \ldots + z_\ell f_\ell$,
where $z_1, \ldots, z_\ell \in \bC_{\bE}$ are to be determined.
Let $G = \{ G_1, \ldots, G_k \}$.
The constraints
\[
 G_j(f) = 0, \quad j = 1, \ldots, k,
\]
on~$f$ are equivalent to the constraints
\[
 z_1 G_j(f_1) + \cdots + z_\ell G_j(f_\ell) = 0, \quad j = 1, \ldots, k,
\]
on constants $z_1$, \ldots, $z_\ell$.
By comparing the coefficients of $\bx^\bw (\bw \in \bN^n)$ in both sides of the above equations,
we derive a linear system $A \bz = {\bf 0}$, where the matrix~$A$ has infinitely many rows but $\ell$ columns,
and $\bz$ stands for the transpose of $(z_1, \ldots, z_\ell)$. Moreover, both $G_1, \ldots, G_\ell$ and $f_1, \ldots, f_\ell$ have coefficients in $\bK$. So $A$ is a matrix over $\bK$.
 We have that
\begin{equation} \label{EQ:basis}
f \in \sol_{\bE}(G) \quad \Longleftrightarrow  \quad A \bz = {\bf 0}.
\end{equation}
Set
$$\ker(A)=\left\{(c_1, \ldots, c_\ell)^t \mid A(c_1, \ldots, c_\ell)^t =(0, \ldots, 0)^t,  c_1, \ldots, c_\ell \in C_\bE \right\},$$
where~$(\cdot)^t$ stands for the transpose of a vector (matrix).
Since $f_1, \ldots, f_\ell$ are linearly independent over $C_\bE$, the linear independence of the elements in~$\sol_\bE(G)$ over~$C_\bE$
is equivalent to the linear independence of the corresponding vectors in $\ker(A)$ over~$C_\bE$.
Thus,  the dimension of $\ker(A)$ over $C_\bE$ is equal to $\dim_{C_\bE} \sol_\bE(G)$, which is~$d$ by Proposition~\ref{PROP:wn}.
It follows that the rank of~$A$ is equal to~$\ell-d$. Since all the coefficients of $A$ lie in $\bK$, there are $d$ vectors in the intersection of~$\ker(A)$ and $\bK^\ell$,
 which are linearly independent over $\bK$. These vectors are  also linearly independent over $C_\bE$. These vectors give rise to a basis of $\sol_{\bE}(G)$ in~$\bK[[\bx]]$.
The origin is an apparent singularity of~$G$ by Theorem~\ref{THM:chop}.
\end{proof}
Assume that the origin is a singularity of~$G$. By desingularizing the origin, we mean computing a left multiple~$M$ of~$G$ such that the origin is an ordinary point
of~$M$. The next corollary helps us to desingularize an apparent singularity.
\begin{cor} \label{COR:desing}
Let $G$ be a primitive Gr\"obner basis of finite rank. Assume that the origin is an apparent singularity of $G$.
Set~$m = \max_{\bu \in \IE_0(G)} |\bu|$. The the origin is an ordinary point of a primitive Gr\"obner basis of the left ideal
\[  \Drat G \cap \left( \bigcap_{(v_1, \ldots, v_n) \in \bN_m^n \setminus \IE_0(G)} \Drat \{ x_1 \pa_1 - v_1, \ldots, x_n \pa_n - v_n \} \right). \]
\end{cor}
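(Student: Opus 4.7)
The plan is to observe that this corollary is essentially a restatement of the construction carried out in the sufficiency direction of Theorem~\ref{THM:rmappsin}, so the proof amounts to identifying the ideal displayed here with the ideal $I\cap J$ built there and then invoking Lemma~\ref{LEM:wronskianrep}.

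First I would set $I=\Drat G$ and, for each $\bv=(v_1,\ldots,v_n)\in\bN^n_m\setminus\IE_0(G)$, let $I_\bv$ denote the left ideal generated by $x_1\pa_1-v_1,\ldots,x_n\pa_n-v_n$. Set $J=\bigcap_{\bv\in\bN^n_m\setminus\IE_0(G)}I_\bv$, so that the ideal in the corollary is $I\cap J$. Since $\sol_\bE(I_\bv)$ is spanned by the monomial $\bx^\bv$, the solution space $\sol_\bE(J)$ is spanned by $\{\bx^\bv:\bv\in\bN^n_m\setminus\IE_0(G)\}$. Because the initial exponents of elements of $\sol_\bE(I)\cap\bK[[\bx]]$ all lie in $\IE_0(G)$, the spaces $\sol_\bE(I)$ and $\sol_\bE(J)$ intersect trivially.

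Next, by Lemma~\ref{LEM:exact}(iii) we have $\sol_\bE(I\cap J)=\sol_\bE(I)+\sol_\bE(J)$, which by the previous paragraph is a direct sum. Since the origin is an apparent singularity of $G$, the space $\sol_\bE(I)$ has a $\bK$-basis in $\bK[[\bx]]$ whose initial exponents are $\IE_0(G)$, and together with the monomials in $\sol_\bE(J)$ this yields a basis of $\sol_\bE(I\cap J)$ lying in $\bK[[\bx]]$ whose set of initial exponents is exactly $\bN^n_m$. In particular $\IE_0(I\cap J)=\bN^n_m$.

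Finally I would let $M$ be a primitive Gröbner basis of $I\cap J$. The previous step shows that $\sol_\bE(M)$ admits a basis in $\bK[[\bx]]$ and that $\IE_0(M)=\bN^n_m$. Lemma~\ref{LEM:wronskianrep} then applies directly and gives that the origin is an ordinary point of $M$, which is the desired conclusion. The only subtle point, and hence the one I would be most careful about, is verifying cleanly that $\sol_\bE(I)$ and $\sol_\bE(J)$ meet only in zero; once that is in place, everything else is bookkeeping and citations.
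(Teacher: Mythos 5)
Your argument is essentially identical to the paper's: the corollary is proved there by pointing back to the sufficiency part of Theorem~\ref{THM:rmappsin}, which sets up exactly your $I$, $I_\bv$, $J$, applies Lemma~\ref{LEM:exact}(iii) to get $\sol_\bE(I\cap J)=\sol_\bE(I)\oplus\sol_\bE(J)$ with $\IE_0(I\cap J)=\bN^n_m$, and concludes via Lemma~\ref{LEM:wronskianrep}. The one subtlety you flag (trivial intersection of $\sol_\bE(I)$ and $\sol_\bE(J)$) is settled, as in the paper, by noting that power series with distinct initial exponents are $\bK$-linearly independent and hence $C_\bE$-linearly independent, so your proposal is correct and matches the paper's route.
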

\begin{proof}
It is direct from the proof on the sufficiency of the above theorem.
\end{proof}

\section{Desingularization and applications}

Let~$G$ be a primitive Gr\"obner basis with the origin being an apparent singularity.
We want to compute a left multiple $M$ of $G$ such that the origin is an ordinary point of $M$.
To this end,
we need to study $\IE_0(G)$ by Corollary~\ref{COR:desing}.

\subsection{Indicial ideals} \label{SUBSECT:indpol}

We extend the notion of indicial polynomials for linear ordinary differential operators to the D-finite case.

Let $\delta_i = x_i \pa_i$ be the Euler operator with respect to $x_i$, $i = 1, \ldots, n$.
The commutation rules in~$\Drat$ imply that~$\delta_i \delta_j = \delta_j \delta_i$ for all~$i,j \in \{1, \ldots, n\}.$
For~$\bu =(u_1, \ldots, u_n) \in \bN^n$, the symbol $\bdelta^\bu$ stands for the product~$\delta_1^{u_1} \cdots \delta_n^{u_n}$.

Recall that the $m$-th falling factorial~\cite[Section 3.1]{Kauers2011} of $x_i$ is
$$(x_i)^{\underline{m}} = x_i (x_i - 1) \cdots (x_i - m + 1),$$
where $m \in \bN$, $i = 1, \ldots, n$.
Moreover, $(x_i)^{\underline{0}} = 1$.
\begin{prop} \label{PROP:Euler}
The following assertions hold for Euler operators:
\begin{itemize}
 \item[(i)] For each $m \in \bN$ and $i \in \{1, \ldots, n \}$, $x_i^m \pa_i^m = (\delta_i)^{\underline{m}}$.
 \item[(ii)] For each $p \in \bK[\bx]$ and $\bx^{\bu} \in \TT(\bx)$, we have $p(\bdelta)(\bx^{\bu}) = p(\bu) \bx^{\bu}$.
\end{itemize}
\end{prop}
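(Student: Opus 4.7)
The plan is to prove the two assertions in turn, each being an elementary consequence of the commutation rules in $\Drat$.

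For (i), I would proceed by induction on $m$. The base case $m=0$ is trivial since both sides equal $1$, and $m=1$ is the definition $x_i\pa_i=\delta_i$. For the inductive step, assuming $x_i^m\pa_i^m=(\delta_i)^{\underline{m}}$, I would show that
\[
  x_i^{m+1}\pa_i^{m+1}=x_i^m\pa_i^m\,(\delta_i-m),
\]
which combined with the induction hypothesis gives $(\delta_i)^{\underline{m}}(\delta_i-m)=(\delta_i)^{\underline{m+1}}$. To verify this identity, I would use the Leibniz-style rule $\pa_i^m x_i = x_i\pa_i^m+m\pa_i^{m-1}$ (which itself follows by a short induction from $\pa_i x_i=x_i\pa_i+1$). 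Multiplying this by $x_i^m$ on the left and $\pa_i$ on the right yields $x_i^m\pa_i^m\,x_i\pa_i = x_i^{m+1}\pa_i^{m+1}+m\,x_i^m\pa_i^m$, and rewriting $x_i\pa_i$ as $\delta_i$ rearranges to the desired relation.

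For (ii), I would first treat the generators $\delta_j$. A direct calculation gives
\[
  \delta_j(\bx^{\bu}) = x_j\pa_j\bigl(x_1^{u_1}\cdots x_n^{u_n}\bigr)=u_j\,\bx^{\bu},
\]
so $\bx^{\bu}$ is a simultaneous eigenvector of $\delta_1,\dots,\delta_n$ with eigenvalues $u_1,\dots,u_n$. Since the $\delta_j$'s pairwise commute (as already recorded), iterated application yields $\bdelta^{\bv}(\bx^{\bu})=\bu^{\bv}\,\bx^{\bu}$ for every $\bv\in\bN^n$. Writing $p=\sum_{\bv} c_{\bv}\bdelta^{\bv}$ and using $\bK$-linearity of the action then gives $p(\bdelta)(\bx^{\bu})=\sum_{\bv} c_{\bv}\bu^{\bv}\bx^{\bu}=p(\bu)\,\bx^{\bu}$.

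Neither part looks like a real obstacle; the only point where one must be careful is the non-commutativity in (i), where the step $x_i^m\pa_i^m\cdot x_i\pa_i=x_i^{m+1}\pa_i^{m+1}+m\,x_i^m\pa_i^m$ must be carried out inside $\Dpol$ using the Weyl relations rather than treated as a commutative manipulation.
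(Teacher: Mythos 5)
Your proof is correct: the induction on $m$ via the Weyl relation $\pa_i^m x_i = x_i\pa_i^m + m\pa_i^{m-1}$ gives (i), and the eigenvector computation $\delta_j(\bx^{\bu})=u_j\bx^{\bu}$ plus commutativity of the $\delta_j$'s and $\bK$-linearity gives (ii). The paper itself offers no argument here — it only cites Section 5.2 of Zhang's thesis — and your write-up is exactly the standard self-contained proof one would expect to find there, with the non-commutative step in (i) handled correctly inside $\Dpol$.
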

\begin{proof}
  See~\cite[Section 5.2]{Zhang2017}.
\end{proof}



Set $\bK[\by] = \bK[y_1, \ldots, y_n]$ to be the ring of usual commutative polynomials with indeterminates $y_1, \ldots, y_n$.

\begin{defn} \label{DEF:indpol}
Let a nonzero operator $P \in \Dpol$ be of order~$m$. Write
\begin{equation} \label{EQ:indpoleq}
\bx^{\bm} P = \sum_{\bv \in S} \bx^{\bv} \left( \sum_{| \bu | \leq m}  c_{\bu, \bv} \bdelta^{\bu} \right),
\end{equation}
where $\bm = (m, \ldots, m) \in \bN^n$ and~$S$ is a finite subset of~$\bN^n$.
Let $\bx^{\bv_0}$ be the minimal term among $\{x^{\bv} \mid \bv \in S \}$ with respect to~$\prec$ such that
$\sum_{| \bu | \leq m}  c_{\bu, \bv_0} \bdelta^{\bu}$ is nonzero.
We call
$$\sum_{| \bu | \leq m} c_{\bu, \bv_0} \by^{\bu} \in \bK[\by]$$
the \emph{indicial polynomial} of~$P$, and denote it by $\ind(P)$.
We further define $\ind(0) := 0$.
\end{defn}

By Proposition~\ref{PROP:Euler} (i), we may always write $\bx^{\bm} P$ in the form~\eqref{EQ:indpoleq}.
The above definition is compatible with the univariate case~\cite{Max2013, Saito1999},
and was already used in the multivariate setting~\cite[Definition 11]{Aroca2001}.

\begin{prop} \label{PROP:indpolroot}
Let $P$ be a nonzero element in $\Dpol$  and $f$
a formal power series solution of $P$ with initial exponent~$\bw$.
Then $\bw$ is a zero of~$\ind(P)$.
\end{prop}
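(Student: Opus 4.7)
The plan is to apply $\bx^\bm P$ to $f$, use Proposition~\ref{PROP:Euler}(ii) to translate the action of each $\bdelta^\bu$ on a monomial into a scalar evaluation, and then read off a single coefficient of $\bx^{\bm}P(f)=0$ to force $\ind(P)$ to vanish at $\bw$.

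First I would write $f = \sum_{\bt \succeq \bw} \frac{c_\bt}{\bt!}\bx^\bt$ with $c_\bw \neq 0$, and apply $\bx^\bm P$ to $f$ using the representation
\[
\bx^\bm P = \sum_{\bv \in S}\bx^\bv\Big(\sum_{|\bu|\le m} c_{\bu,\bv}\bdelta^\bu\Big)
\]
from~\eqref{EQ:indpoleq}. By Proposition~\ref{PROP:Euler}(ii) (taking $p(\by)=\by^\bu$), each $\bdelta^\bu$ acts diagonally on monomials via $\bdelta^\bu(\bx^\bt)=\bt^\bu\bx^\bt$, so
\[
\bx^\bm P(f) \;=\; \sum_{\bv \in S}\sum_{\bt}\frac{c_\bt}{\bt!}\Big(\sum_{|\bu|\le m} c_{\bu,\bv}\bt^\bu\Big)\bx^{\bt+\bv}.
\]
Since $P(f)=0$, the coefficient of every monomial on the right vanishes.

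Next I would extract the coefficient of $\bx^{\bw+\bv_0}$ and argue that only the pair $(\bt,\bv)=(\bw,\bv_0)$ contributes. The contribution of that distinguished pair is precisely $\frac{c_\bw}{\bw!}\ind(P)(\bw)$, by definition of $\ind(P)$. To rule out other pairs $(\bt,\bv)$ with $\bt+\bv=\bw+\bv_0$, I use that $\prec$ is graded and compatible with addition: the initial exponent condition forces $\bt\succeq\bw$, and if $\bt\succ\bw$ then $\bt+\bv=\bw+\bv_0$ gives $\bv\prec\bv_0$. By the minimality in the definition of $\bv_0$, the inner operator $\sum_{|\bu|\le m}c_{\bu,\bv}\bdelta^\bu$ is then the zero element of $\Dpol$; evaluating it on $\bx^\bt$ via Proposition~\ref{PROP:Euler}(ii) shows $\sum c_{\bu,\bv}\bt^\bu=0$, so that pair contributes nothing. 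Hence $\frac{c_\bw}{\bw!}\ind(P)(\bw)=0$, and since $c_\bw\neq 0$ by definition of initial exponent, $\ind(P)(\bw)=0$.

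The main obstacle is the case analysis at the coefficient-extraction step: it is tempting to forget contributions coming from $\bt\succ\bw$ or from other $\bv\in S$ of the same total degree as $\bv_0$. The clean way to kill all of them simultaneously is to exploit the \emph{graded} assumption on $\prec$ together with its compatibility with addition, so that $\bt+\bv=\bw+\bv_0$ with $\bt\succ\bw$ automatically lands $\bv$ strictly below $\bv_0$ in $\prec$, where the inner operator is forced to vanish by the minimality clause in Definition~\ref{DEF:indpol}.
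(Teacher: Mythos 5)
Your proposal is correct and follows essentially the same route as the paper's own proof: apply $\bx^{\bm}P$ to $f$, use Proposition~\ref{PROP:Euler}(ii) to let the Euler operators act diagonally on monomials, and read off the coefficient of the lowest monomial $\bx^{\bw+\bv_0}$, which is a nonzero multiple of $\ind(P)(\bw)$. The only difference is presentational: the paper compresses your case analysis into the phrase ``$+$ higher monomials in $\bx$'', whereas you spell out why pairs with $\bt\succ\bw$ are killed by the minimality of $\bv_0$ and the compatibility of $\prec$ with addition.
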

\begin{proof}
Assume that $P$ is an operator of order~$m$ and write
$$\bx^{\bm} P = \sum_{\bv \in S} \bx^{\bv} \left( \sum_{| \bu | \leq m}  c_{\bu, \bv} \bdelta^{\bu} \right),$$
as given in Definition~\ref{DEF:indpol}.
By Proposition~\ref{PROP:Euler} (ii), we have
\[
\begin{array}{lll}
 \left( \bx^{\bm} P \right)(f) & = & \left[ \sum_{\bv \in S} \bx^{\bv} \left( \sum_{| \bu | \leq m}  c_{\bu, \bv} \bdelta^{\bu} \right) \right]
 \left( \bx^{\bw} + \text{ higher monomials in $\bx$} \right) \\
  & = & \bx^{\bv_0} \left( \sum_{| \bu | \leq m} c_{\bu, \bv_0} \bdelta^{\bu} \right) (\bx^{\bw}) + \text{ higher monomials in $\bx$} \\
  & = & \left( \sum_{| \bu | \leq m} c_{\bu, \bv_0} \bw^{\bu} \right) \bx^{\bv_0 + \bw} + \text{ higher monomials in $\bx$} \\
  & = & 0
\end{array}
\]
Thus, $\sum_{| \bu | \leq m} c_{\bu, \bv_0} \bw^{\bu} = 0,$
\ie, $\ind(P)(\bw) = 0$.
\end{proof}

\begin{ex} \label{EX:indpol}
Consider the Gr\"{o}bner basis $G = \{G_1, G_2 \}$ in $\bQ(x_1, x_2)[\pa_1, \pa_2]$, where
$G_1 = x_1 x_2 \pa_2 - x_1 x_2 \pa_1 + (x_2-x_1)$ and $ G_2 = x_1^2 \pa_1^2 - 2 x_1 \pa_1 + (2 + x_1^2).$
By computation, we find that $\ind(G_1) = y_2  - 1$, $\ind(G_2) = (y_1 -1) (y_1 - 2)$.
It is straightforward to verify that $G$ has two formal power series solutions
$$\{f_1 = x_1 x_2 \sin(x_1 + x_2), f_2 = x_1 x_2 \cos(x_1 + x_2) \},$$
with $\In(f_1) = x_1^2 x_2$ and $\In(f_2) = x_1 x_2$.
The corresponding initial exponents
$$\{ (2, 1), (1, 1) \}$$
are two zeros of $\ind(G_1)$ and $\ind(G_2)$.
\end{ex}


\begin{defn}
Let $I$ be a left ideal of $\Drat$.
We call
$$\left\{ \ind(P) \mid P \in I \cap \Dpol \right\}$$
the \emph{indicial ideal} of~$I$, and denote it by $\ind(I)$.
\end{defn}

\begin{thm} \label{THM:indideal}
Let $I$ be a left ideal of $\Drat$.
Then  $\ind(I)$ is an ideal in~$\bK[\by]$. Moreover, $\ind(I)$ is zero-dimensional if $I$ is D-finite.
\end{thm}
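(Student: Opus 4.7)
The plan is to prove the two claims in sequence: that $\ind(I)$ is closed under addition and under multiplication by every $y_{i}$ — hence a $\bK[\by]$-ideal — and then, under the D-finiteness hypothesis, that $\ind(I)\cap\bK[y_{i}]\neq\{0\}$ for every $i$, which forces $\bK[\by]/\ind(I)$ to be a finite-dimensional $\bK$-vector space.

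For additivity, the key observation is that left-multiplication of $P\in I\cap\Dpol$ by a monomial $\bx^{\ba}$ with $\ba\in\bN^{n}$ shifts the $\prec$-minimum term $\bx^{\bv_{0}}$ in the Euler representation~\eqref{EQ:indpoleq} to $\bx^{\bv_{0}+\ba}$ but leaves the coefficient operator $\sum_{|\bu|\leq m}c_{\bu,\bv_{0}}\bdelta^{\bu}$ unchanged, so $\ind(\bx^{\ba}P)=\ind(P)$; the same remark shows that enlarging $\bm$ beyond the order of $P$ does not change $\ind(P)$ either. Given nonzero $P,Q\in I\cap\Dpol$ with minimum exponents $\bv_{P}^{\ast}$ and $\bv_{Q}^{\ast}$ relative to a common $\bm=(m,\ldots,m)$ with $m\geq\max\{\operatorname{ord}(P),\operatorname{ord}(Q)\}$, I would take $\ba_{i}=\max\{0,(\bv_{Q}^{\ast})_{i}-(\bv_{P}^{\ast})_{i}\}$ and, symmetrically, $\bb_{i}=\max\{0,(\bv_{P}^{\ast})_{i}-(\bv_{Q}^{\ast})_{i}\}$, so that $\bv_{P}^{\ast}+\ba=\bv_{Q}^{\ast}+\bb$; then $R:=\bx^{\ba}P+\bx^{\bb}Q\in I\cap\Dpol$, and a direct comparison of Euler expansions shows that the coefficient of the common minimum monomial in $\bx^{\bm}R$ is exactly $\ind(P)+\ind(Q)$. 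If this sum is nonzero it equals $\ind(R)\in\ind(I)$, and if it is zero then $0\in\ind(I)$ trivially.

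For closure under multiplication by $y_{i}$, the identity $\delta_{i}\bx^{\bv}=\bx^{\bv}(\delta_{i}+v_{i})$ does the work. Writing $\bx^{\bm}P=\sum_{\bv}\bx^{\bv}A_{\bv}(\bdelta)$ with $\ind(P)=A_{\bv_{0}}$ at the minimum $\bv_{0}$, this identity combined with $\delta_{i}\bx^{\bm}=\bx^{\bm}(\delta_{i}+m)$ yields $\bx^{\bm}\delta_{i}P=\sum_{\bv}\bx^{\bv}(\delta_{i}+v_{i}-m)A_{\bv}(\bdelta)$. Passing to the enlarged $\bm'=\bm+(1,\ldots,1)$ so that both $\delta_{i}P$ and $P$ admit Euler forms with this common $\bm'$, one finds that for every $c\in\bK$ the operator $\delta_{i}P-cP\in I\cap\Dpol$ has indicial polynomial $(y_{i}+(\bv_{0})_{i}-m-c)\ind(P)$; setting $c=(\bv_{0})_{i}-m$ produces exactly $y_{i}\cdot\ind(P)$. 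Together with additivity and the obvious closure under $\bK$-scalar multiplication, this proves that $\ind(I)$ is a $\bK[\by]$-ideal.

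For zero-dimensionality, the D-finiteness hypothesis enters through the fact that $\Drat/I$ has dimension $d=\rank(I)$ over $\bK(\bx)$, so the classes of $1,\delta_{i},\delta_{i}^{2},\ldots,\delta_{i}^{d}$ are linearly dependent modulo $I$ over $\bK(\bx)$ for every $i$. Clearing denominators yields a nonzero $P_{i}=\sum_{k=0}^{r}b_{k}(\bx)\delta_{i}^{k}\in I\cap\Dpol$ with $b_{k}\in\bK[\bx]$, which rewritten in the falling-factorial basis via Proposition~\ref{PROP:Euler}(i) becomes $P_{i}=\sum_{j=0}^{r}c_{j}(\bx)x_{i}^{j}\pa_{i}^{j}$ with $c_{j}\in\bK[\bx]$ not all zero. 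Expanding $\bx^{(r,\ldots,r)}P_{i}=\sum_{\bv}\bx^{\bv}A_{\bv}(\bdelta)$, and letting $\balpha_{0}$ denote the $\prec$-minimum exponent in $\bigcup_{j}\operatorname{supp}(c_{j})$, I would verify that the $\prec$-minimum $\bv$ with $A_{\bv}\neq 0$ is $(r,\ldots,r)+\balpha_{0}$, and there $A_{\bv}=\sum_{j}c_{j,\balpha_{0}}\delta_{i}^{\underline{j}}\in\bK[\delta_{i}]\setminus\{0\}$. Hence $\ind(P_{i})\in\bK[y_{i}]\setminus\{0\}$, so $\ind(I)\cap\bK[y_{i}]\neq\{0\}$ for every $i$, forcing $\bK[\by]/\ind(I)$ to be finite-dimensional over $\bK$. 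The main technical obstacle throughout is the careful bookkeeping of the non-commutation between $\bx^{\bm}$ and $\bdelta$ inside the Euler representation~\eqref{EQ:indpoleq}; every nontrivial step is ultimately controlled by the single commutation relation $\delta_{i}\bx^{\bv}=\bx^{\bv}(\delta_{i}+v_{i})$.
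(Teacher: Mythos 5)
Your proposal is correct and takes essentially the same route as the paper's proof: read $\ind(P)$ off the Euler-form representation, align the $\prec$-minimal terms by left multiplication with monomials to get additivity, use the commutation $\delta_i\bx^\bv=\bx^\bv(\delta_i+v_i)$ to get closure under multiplication (the paper absorbs a whole monomial $\by^\bw$ at once via $\prod_i(\delta_i-s_i)^{w_i}$, you iterate one $y_i$ at a time with the shift $c=(\bv_0)_i-m$), and then produce for each $i$ a nonzero element of $\ind(I)\cap\bK[y_i]$ and invoke the standard finiteness criterion for zero-dimensionality. The only cosmetic difference in the last step is that the paper cites an elimination result giving a nonzero $P\in I\cap\bK[\bx][\pa_i]$, whereas you obtain the univariate-in-$\delta_i$ operator directly from $\rank(I)<\infty$ via linear dependence of $1,\delta_i,\ldots,\delta_i^d$ modulo $I$ over $\bK(\bx)$; both are valid.
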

\begin{proof}
For any two $a, b \in \ind(G)$, there exist two operators $P, Q \in I$ such that $a = \ind(P)$ and $b = \ind(Q)$.
Let $u$ and $v$ be the respective orders of $P$ and~$Q$.
Set $\bu = (u, \ldots, u)$ and  $\bv = (v, \ldots, v)$.
Expressing~$\bx^\bu P$ and~$\bx^\bv Q$ as polynomials in~$\bx$ with coefficients in $\bK[\bdelta]$ placed on the right-hand side
of the powers of~$\bx$, we have
\[
\begin{array}{lll}
 \bx^{\bu} P & = & \bx^{\bs} \left( \sum_{| \bu | \leq u}  c_{\bu, \bs} \bdelta^{\bu} \right) + \text{ higher terms}, \\
 \bx^{\bv} Q & = & \bx^{\bt} \left( \sum_{| \bu | \leq v}  c_{\bu, \bt} \bdelta^{\bu} \right) + \text{ higher terms}.
\end{array}
\]
Thus, $a=\sum_{| \bu | \leq u}  c_{\bu, \bs} \by^{\bu}$ and~$b = \sum_{| \bu | \leq v}  c_{\bu, \bt} \by^{\bu}$.
Let $L = \bx^{\bt}(\bx^{\bu} P) + \bx^{\bs} (\bx^{\bv} Q)$, which belongs to $I$. Then
\[
 L = \bx^{\bs + \bt} \left( \sum_{| \bu | \leq u}  c_{\bu, \bs} \bdelta^{\bu}
 + \sum_{| \bu | \leq v}  c_{\bu, \bt} \bdelta^{\bu} \right) + \text{ higher terms}.
\]
Let $m$ be the order of $L$ and $\bm = (m, \ldots, m)$. Then
\[
 \bx^{\bm} L = \bx^{\bs + \bt + \bm} \left( \sum_{| \bu | \leq u}  c_{\bu, \bs} \bdelta^{\bu}
 + \sum_{| \bu | \leq v}  c_{\bu, \bt} \bdelta^{\bu} \right) + \text{ higher terms}.
\]
Thus, $a + b = \ind(L)$.

Assume that $r \in \bK[\by]$.
We want to prove that $r a \in \ind(G)$.
Since $r$ is a sum of monomials in $y_1, \ldots, y_n$, it suffices to prove that $r a \in \ind(G)$
for each term~$r$.
Assume that $r = \by^{\bw}$, where $\bw = (w_1, \ldots, w_n) \in \bN^n$.
Then
\[
 \bx^{\bu} P  =  \bx^{\bs} \left( \sum_{| \bu | \leq u}  c_{\bu, \bs} \bdelta^{\bu} \right) + \text{ higher terms},
\]
where $\bs = (s_1, \ldots, s_n) \in \bN^n$.
Let $H = \prod_{i = 1}^n (\delta_i - s_i)^{w_i} \bx^{\bu} P$, which belongs to~$I$,  and note that $(\delta_i-k) x_i^k = x_i^k \delta_i$ for
all $i \in \{1, \ldots, n\}$ and~$k \in \bN$. Then
\[
\begin{array}{lll}
 H & = &  \left( \prod_{i = 1}^n (\delta_i - s_i)^{w_i} \right) \bx^{\bs} \left( \sum_{| \bu | \leq u}  c_{\bu, \bs} \bdelta^{\bu} \right) + \text{ higher terms} \\
   & = &  \left( \prod_{i = 1}^n (\delta_i - s_i)^{w_i} x_i^{s_i} \right) \left( \sum_{| \bu | \leq u}  c_{\bu, \bs} \bdelta^{\bu} \right) + \text{ higher terms} \\
   & = &  \left( \prod_{i = 1}^n x_i^{s_i} \delta_i^{w_i} \right) \left( \sum_{| \bu | \leq u}  c_{\bu, \bs} \bdelta^{\bu} \right) + \text{ higher terms} \\
   & = & \bx^{\bs} \left( \bdelta^{\bw} \sum_{| \bu | \leq u}  c_{\bu, \bs} \bdelta^{\bu} \right) + \text{ higher terms}. \\
\end{array}
\]
Let $\tilde{m}$ be the order of $H$ and $\tilde{\bm} = (\tilde{m}, \ldots, \tilde{m})$. Then
\[
 \bx^{\tilde{\bm}} H = \bx^{\bs + \tilde{\bm}} \left( \bdelta^{\bw} \sum_{| \bu | \leq u}  c_{\bu, \bs} \bdelta^{\bu} \right) + \text{ higher terms}.
\]
Thus,  $r a = \ind(H)$. Consequently, $\ind(I)$ is an ideal in $\bK[\by]$.

Assume further that $I$ is D-finite. Then there exists a nonzero operator~$P$ of order~$m$ such that
$P \in I \cap \bK[\bx][\pa_1]$ (see, e.g., \cite[Proposition 2.10]{Christoph2009} for a proof).
By  Proposition~\ref{PROP:Euler} (i), we have
\[
\begin{array}{lll}
 x_1^m P & = & x_1^m \left(c_0 + c_1 \pa_1 + \cdots + c_m \pa_1^m \right) \\
       & = & c_0 x_1^m + c_1 x_1^{m - 1} \delta_1 + \cdots + c_m (\delta_1)^{\underline{m}} \\
       & = & \sum_{\bv \in T} \bx^{\bv} \left( \sum_{k \leq m}  c_{\bu, \bv} \delta_1^{k} \right)
\end{array}
\]
Thus, $\ind(P) \in \bK[y_1] \setminus \{ 0 \}$.
In the same vein, $\ind(I) \cap \bK[y_i]$ is nontrivial for all~$i$ with~$2 \le i \le n$.
By~\cite[Theorem 6, page 251]{Cox2015}, $\ind(I)$ is zero-dimensional.
\end{proof}

Given a Gr\"obner basis~$G$ of finite rank, the indicial ideal of~$\Drat G$ is simply denoted by $\ind(G)$.
By the last paragraph of the proof of Proposition~\ref{THM:indideal}, we can construct a sub-ideal~$J$ of $\ind(G)$ such that $J$ is zero-dimensional. However, the above proposition does not necessarily give access to a basis of $\ind(G)$.

\begin{defn} \label{DEF: indcandidate}
Let $G$ be a primitive Gr\"obner basis of finite rank.
Assume that $J$ is a zero-dimensional ideal contained in $\ind(G)$.
The set of nonnegative integer solutions of~$J$ is called
a set of \emph{initial exponent candidates} of~$G$.
\end{defn}

By Proposition~\ref{PROP:indpolroot}, the set of initial exponents of formal power series solutions of $G$
must be contained in a set of initial exponent candidates of~$G$.
Such a candidate set can be obtained by computing nonnegative integer solutions of some zero-dimensional system over~$\bK$.

\begin{ex} \label{EX:exponentcandidates1}
Consider the Gr\"{o}bner basis $G = \{G_1, G_2 \}$ from Example~\ref{EX:indpol}, where
$G_1 = x_1 x_2 \pa_2 - x_1 x_2 \pa_1 + (-x_1 + x_2)$ and $G_2 = x_1^2 \pa_1^2 - 2 x_1 \pa_1 + (2 + x_1^2).$
By computation, we find that $\ind(G_1) = y_2  - 1$, $\ind(G_2) = (y_1 -1) (y_1 - 2)$.
By the above definition, the set
$\{ (2, 1), (1, 1) \}$
is a set of initial exponent candidates of~$G$. Actually, $(2, 1)$ and $(1, 1)$ are initial exponents
of the following formal power series solutions
$x_1 x_2 \sin(x_1 + x_2) \text{ and } x_1 x_2 \cos(x_1 + x_2),$
respectively.
\end{ex}

The following example shows that initial candidates of~$G$ do not necessarily
give rise to formal power series solutions of~$G$.

\begin{ex} \label{EX:exponentcandidates2}
Consider the Gr\"{o}bner basis in $\bQ(x_1, x_2)[\pa_1, \pa_2]$:
\[
\begin{array}{lll}
 G & = & \{G_1, G_2 \} \\
   & = & \{x_1 x_2 \pa_2 + (-x_1^2 + 2 x_1 x_2) \pa_1 - 2 x_2, (x_1^3 - x_1^2 x_2) \pa_1^2 + 2 x_1 x_2 \pa_1 - 2 x_2 \}
\end{array}
\]
By computation, we find that $\ind(G_1) = y_2 - y_1$ and $\ind(G_2) = (y_1 - 1) y_1$.
Thus, a set of initial exponent candidates of $G$ is
\[
 S = \{(0, 0), (1, 1) \}
\]
Actually, $\sol_{\bE}(G)$ is spanned by $\{\frac{x_1}{x_1 - x_2}, x_1 x_2 \}$.
In this case, $(1, 1)$ is the initial exponent of $x_1 x_2$.
However, $(0, 0)$ is not the initial exponent of any formal power series solution of~$G$.
\end{ex}
\subsection{Desingularization}

We present two algorithms: one is for removing apparent singularities, and the other is for detecting whether a singularity is apparent.
\begin{algo} \label{ALGO:desingularization1}
Given a primitive Gr\"obner basis $G$ of finite rank such that the origin is an apparent singularity of $G$,
compute a left multiple $M$ of $G$ such that the origin
is an ordinary point of~$M$.
\begin{itemize}
\item[(1)] Set $d:=\rank(G)$.
\item [(2)] Compute a set of initial exponent candidates $S$ by Theorem~\ref{THM:indideal}.
 \item [(3)] For each $B \subset S$ with $|B|=d$,
 \begin{itemize}
 \item[(3.1)] set $m:= \max_{\bu \in B} |\bu|$;
 \item[(3.2)]compute a primitive Gr\"obner basis $M_B$ of the left ideal
 \[  \Drat G \cap \left( \bigcap_{(v_1, \ldots, v_n) \in \bN_m^n \setminus B} \Drat \{ x_1 \pa_1 - v_1, \ldots, x_n \pa_n - v_n \} \right); \]
 \item[(3.3)] if the origin is an ordinary point, then return $M_B$.
 \end{itemize}
\end{itemize}
\end{algo}
The above algorithm evidently terminates.  We have that~$\IE_0(G) \subset S$ and $|\IE_0(G)|=d$, since the origin is an apparent singularity of~$G$.
It follows from Corollary~\ref{COR:desing} that there exists a subset $B$ of~$S$ with~$|B|=d$ such that the origin is an ordinary point of~$M_B$.
So the above algorithm is correct.

\begin{ex}
Consider the Gr\"{o}bner basis in Example~\ref{EX:appsin1}:
\[
 G = \{x_2 \pa_2 + \pa_1 - x_2 - 1, \pa_1^2 - \pa_1 \},
\]
where $\sol_{\bE}(G)$ is spanned by $\{\exp(x_1 + x_2), x_2 \exp(x_2)\}$ with
initial exponents
$$B = \{ (0, 0), (0, 1) \}.$$
In this case, the origin is an apparent singularity of~$G$.

Note that $\bN^2_1 = \{ (i, j) \in \bN^2 \mid i + j \leq 1 \} = \{ (0, 0), (1, 0), (0, 1) \}$.
Then
$$\bN^2_1 \setminus B = \{ (1, 0) \}.$$
Let $M$ be another Gr\"{o}bner basis with
\[
 \Drat M = \Drat G \cap \Drat \{x_1 \pa_1 - 1, \pa_2 \}.
\]
We find that $\HC(M) = \{ 1 - x_1 - x_1 x_2\}$.
It follows from Definition~\ref{DEF:op} that~$M$ is a left multiple of $G$ for which
the origin is an ordinary point.
\end{ex}

\begin{ex}
Consider the Gr\"{o}bner basis from Example~\ref{EX:appsin2}:
\[
G = \{x_2^2 \pa_2 - x_1^2 \pa_1 + x_1 - x_2, \pa_1^2 \},
\]
where $\sol_{\bE}(G)$ is spanned by $\{x_1 + x_2, x_1 x_2\}$.
In this case, the origin is an apparent singularity of~$G$.

A candidate set $B$ of indicial exponents is
$B = \{(1, 0), (1, 1) \} .$
Then
$$\bN^2_2 \setminus B = \{ (0, 0), (0, 1), (2, 0), (0, 2) \}.$$
Let $M$ be another Gr\"{o}bner basis with
\[
 \Drat M = \Drat G \cap \left(\bigcap_{(s, t) \in \bN^2_2 \setminus B} \Drat \{ x_1 \pa_1 - s, x_2 \pa_2 - t \} \right)
\]
We find that
\[
 M = \{\pa_1^3,  \pa_1^2 \pa_2, \pa_1 \pa_2^2, \pa_2^3 \}.
\]
The origin is an ordinary point of $M$.
\end{ex}

The next algorithm is a direct application of Algorithm~\ref{ALGO:desingularization1}.
\begin{algo} \label{ALGO:desingularization2}
Given a primitive Gr\"obner basis $G$ of finite rank such that the origin is a singularity of $G$,
determine whether the origin is an apparent one, and return a left multiple $M$ of $G$ such that the origin
is an ordinary point of~$M$ when the origin is an apparent singularity.
\begin{itemize}
\item[(1)] Set $d:=\rank(G)$.
\item [(2)] Compute a set of initial exponent candidates $S$ by Theorem~\ref{THM:indideal}. If $|S| < d$, then return \lq\lq the origin is not an apparent singularity\rq\rq.
 \item [(3)] For each $B \subset S$ with $|B|=d$,
 \begin{itemize}
 \item[(3.1)] set $m:= \max_{\bu \in B} |\bu|$;
 \item[(3.2)]compute a primitive Gr\"obner basis $M_B$ of the left ideal
 \[  \Drat G \cap \left( \bigcap_{(v_1, \ldots, v_n) \in \bN_m^n \setminus B} \Drat \{ x_1 \pa_1 - v_1, \ldots, x_n \pa_n - v_n \} \right); \]
 \item[(3.3)] if the origin is an ordinary point, then return $M_B$.
 \end{itemize}
 \item[(4)] Return \lq\lq the origin is not an apparent singularity\rq\rq.
\end{itemize}
\end{algo}
The above algorithm clearly terminates. If the candidate set~$S$ has less than $d$ many elements, then the solution space of~$G$ in $\bE$ cannot be spanned
by formal power series. Thus, the origin is not an apparent singularity. The rest of the above algorithm is correct by Algorithm~\ref{ALGO:desingularization1}.

%
%

\begin{ex} \label{EX:detectappsin1}
Consider the Gr\"{o}bner basis from Example~\ref{EX:exponentcandidates2}:
\[
\begin{array}{lll}
 G & = & \{G_1, G_2 \} \\
   & = & \{x_1 x_2 \pa_2 + (-x_1^2 + 2 x_1 x_2) \pa_1 - 2 x_2, (x_1^3 - x_1^2 x_2) \pa_1^2 + 2 x_1 x_2 \pa_1 - 2 x_2 \}.
\end{array}
\]
Then $\rank(G) = 2$ and the origin is a singularity of~$G$.
By computation, we find that $\ind(G_1) = y_2 - y_1$ and $\ind(G_2) = (y_1 - 1) y_1$.
Thus, a set of initial exponent candidates of $G$ is
\[
 S = \{(0, 0), (1, 1) \}
\]
Let $B = S$. Then $\bN_2^2 \setminus B = \{ (1, 0), (0, 1), (2, 0), (0, 2) \}.$
Let $M$ be another Gr\"{o}bner basis with
\[
 \Drat M = \Drat G \cap \left(\bigcap_{(s, t) \in U_2 \setminus B} \Drat \{ x_1 \pa_1 - s, x_2 \pa_2 - t \} \right).
\]
We find that
\[
 \HC(M) = \{x_1^4 - 3 x_1^3 x_2 + 3 x_1^2 x_2^2 - x_1 x_2^3, - x_1^3 + 3 x_1^2 x_2 - 3 x_1 x_2^2 + x_2^3 \}.
\]
Thus, the origin is a singularity of~$M$.
By Theorem~\ref{THM:rmappsin}, we conclude that the origin is not an apparent singularity of~$G$.
Actually, $\sol_{\bE}(G)$ is spanned by
$$\left\{\frac{x_1}{x_1 - x_2}, x_1 x_2 \right\}.$$
\end{ex}

\begin{ex} \label{EX:detectappsin2}
Consider the Gr\"{o}bner basis in $\bQ(x_1, x_2)[\pa_1, \pa_2]$:
\begin{alignat*}1
 G & =  \{G_1, G_2, G_3 \} \\
   & =  \{(x_1 - x_2) \pa_1^2 - x_1 x_2 \pa_2 + x_1 x_2 \pa_1 + (x_1 - x_2), \\
   & \hphantom{{}=\{}   (x_1 - x_2) \pa_1 \pa_2 + (-1 - x_1 x_2) \pa_2 + (1 + x_1 x_2) \pa_1 + (x_1 - x_2), \\
   & \hphantom{{}=\{}   (x_1 - x_2) \pa_2^2 - x_1 x_2 \pa_2 + x_1 x_2 \pa_1 + (x_1 - x_2) \}.
\end{alignat*}
Then $\rank(G) = 3$ and the origin is a singularity of~$G$.
From the three indcial polynomials $\ind(G_1) = (y_1 - 1) y_1$, $\ind(G_2) = y_2 (y_1 - 1)$
and $\ind(G_3) = (y_2 - 1) y_2$, we find that a set of initial exponent candidates of $G$ is
\[
 S = \{(0, 0), (1, 0), (1, 1) \}.
\]
Let $B = S$. Then $\bN^2_2 \setminus B = \{ (0, 1), (2, 0), (0, 2) \}.$
Let $M$ be another Gr\"{o}bner basis with
\[
 \Drat M = \Drat G \cap \left(\bigcap_{(s, t) \in \bN^2_2 \setminus B} \Drat \{ x_1 \pa_1 - s, x_2 \pa_2 - t \} \right).
\]
We find that the origin is an ordinary point of~$M$.
By Theorem~\ref{THM:rmappsin}, we conclude that the origin is an apparent singularity of~$G$.
Actually, $\sol_{\bE}(G)$ is spanned by
$$\{\sin(x_1 + x_2), \cos(x_1 + x_2), x_1 x_2 \}.$$
\end{ex}

\subsection{A heuristic method for desingularization}

For a nonzero operator $L\in \bK[x_1][\partial_1]$ with apparent singularities,
the randomized algorithm in~\cite{Chen2016} computes a desingularized operator for $L$
by taking the least common left multiple of the operator~$L$ with a random operator of appropriate order with constant coefficients.
This algorithm has been proved to obtain a correct desingularized operator for $L$ with probability one, and is more efficient than deterministic algorithms.
We now extend this randomized technique to the case of several variables. To this end, we need two lemmas about determinants.

\begin{lemma}\label{LEM:indet}
Let~$A=(a_{ij})$ be a full rank $(k+d) \times d$ matrix over~$\bK$
and~$(y_{ij})$ be a $(k+d) \times k$ matrix whose entries are distinct indeterminates.
Then the determinant
\[
 \Delta=\left|
  \begin{array}{cccccc}
    a_{1,1} & \cdots  & a_{1, d}  & y_{11}  &\cdots & y_{1k} \\
    a_{2,1} & \cdots  & a_{2, d}  & y_{21} &\cdots & y_{2k} \\
    \vdots & \ddots & \vdots  & \vdots & \ddots & \vdots \\
 a_{k+d,1} & \cdots  & a_{k+d,d}  & y_{k+d,1} &\cdots & y_{k+d,k}\\
  \end{array}
\right|
\]
is equal to a nonzero polynomial of the form
\begin{equation} \label{EQ:poly}
\sum_{(i_1, \ldots, i_k) \in S}  \alpha_{i_1, \ldots, i_k} y_{i_1,1} \cdots y_{i_k,k},
\end{equation}
 where $S$ is a nonempty subset of~$\bN^k_{k+d}$, and $\alpha_{i_1, \ldots, i_k} \in \bK$ is nonzero for every
 $(i_1, \ldots, i_k) \in S$.
\end{lemma}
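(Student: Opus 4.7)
The plan is to write $\Delta$ explicitly as a polynomial in the indeterminates $y_{ij}$ by the Leibniz formula, then identify its coefficients as maximal minors of~$A$ and invoke the full-rank hypothesis on~$A$ to produce at least one nonvanishing coefficient. The main steps are laid out below; no single step is particularly hard, but one should be careful with signs and with the bookkeeping of the monomials.

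First I would expand $\Delta$ using the Leibniz sum over permutations $\sigma$ of $\{1,\dots,k+d\}$. Each permutation contributes a product that takes one entry from each of the $d$ constant columns and one entry from each of the $k$ indeterminate columns; the indeterminate part of this product is $y_{\sigma(d+1),1}\,y_{\sigma(d+2),2}\cdots y_{\sigma(d+k),k}$. Setting $i_j := \sigma(d+j)$ we obtain a tuple of \emph{pairwise distinct} indices $(i_1,\dots,i_k)$ drawn from $\{1,\dots,k+d\}$, and distinct tuples correspond to distinct monomials of the form $y_{i_1,1}\cdots y_{i_k,k}$ (the second subscript tags which column, hence which variable, each factor comes from). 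So $\Delta$ is a $\bK$-linear combination of such squarefree monomials, which is already the shape demanded by~\eqref{EQ:poly}.

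Next I would compute the coefficient $\alpha_{i_1,\dots,i_k}$ explicitly. Once we fix which row of the $y$-block goes into which of the columns $d+1,\dots,d+k$, the remaining permutations run over the bijections between the first $d$ columns and the complementary row set $I:=\{1,\dots,k+d\}\setminus\{i_1,\dots,i_k\}$. Summing these contributions with their signs is, up to an overall sign depending only on $(i_1,\dots,i_k)$, the determinant of the $d\times d$ submatrix $A_I$ of~$A$ formed by the rows in~$I$. Hence
\[
\alpha_{i_1,\dots,i_k} \;=\; \pm \det(A_I),
\]
and in particular $\alpha_{i_1,\dots,i_k}=0$ whenever the indices are not pairwise distinct.

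Finally, because $A$ has full column rank $d$, the Cauchy--Binet/rank characterization guarantees the existence of a $d$-subset $I\subset\{1,\dots,k+d\}$ with $\det(A_I)\neq 0$. Taking $(i_1,\dots,i_k)$ to be any ordering of the complementary set $\{1,\dots,k+d\}\setminus I$ produces a monomial whose coefficient in $\Delta$ is nonzero. Since different tuples give different monomials there is no cancellation, so $\Delta$ is a nonzero polynomial of the claimed form, with~$S$ equal to the set of tuples whose index sets have nonsingular complementary minor. The only delicate point is the sign bookkeeping in step two; everything else is standard multilinear algebra.
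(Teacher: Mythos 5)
Your proof is correct, and it takes a slightly different route than the paper. The paper's proof also starts from the existence of a nonzero $d\times d$ minor of~$A$ (say on a row set $I$), but instead of expanding, it specializes to zero all the indeterminates $y_{ij}$ in the rows belonging to~$I$; the specialized determinant becomes block lower triangular, equal to $\det(A_I)$ times the determinant of the remaining $k\times k$ block of distinct indeterminates, which is a nonzero polynomial, so $\Delta\neq 0$, and the shape~\eqref{EQ:poly} is then read off by expanding and collecting like terms. You instead perform the full Laplace expansion along the $k$ indeterminate columns, which identifies every coefficient $\alpha_{i_1,\ldots,i_k}$ as $\pm\det(A_I)$ for the complementary row set~$I$, and you observe that distinct tuples yield distinct monomials so no cancellation can occur; full column rank then supplies one nonvanishing complementary minor. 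The two arguments rest on the same use of the full-rank hypothesis, but yours is more informative: it gives an explicit formula for the coefficients and an exact description of the support~$S$ (tuples whose complementary $d\times d$ minor is nonsingular), whereas the paper's specialization argument is shorter and avoids any sign bookkeeping. The sign issue you flag is indeed the only delicate point in your version, and it is handled by the standard Laplace expansion along a fixed set of columns, so there is no gap.
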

\begin{proof}
Since~$A$ is of full rank, there exists a $d \times d$ nonzero minor in~$A$. Without loss of generality,
we may assume that the minor consists of the first $d$ rows and the first $d$ columns.
Setting $y_{ij}=0$ for all~$i,j$ with~$1 \le i \le d$ and~$1 \le j \le k$,
we transform the determinant $\Delta$ to
\[
 \left|
  \begin{array}{cccccc}
    a_{1,1} & \cdots  & a_{1, d}  &  0  &\cdots & 0  \\
    \vdots  & \ddots  & \vdots    &  \vdots & \ddots & \vdots \\
    a_{d,1} & \cdots  & a_{d, d}  &  0 &\cdots &  0 \\
    a_{d+1,1} & \cdots  & a_{d+1, d}  &  y_{d+1,1} &\cdots &  y_{d+1,k} \\
    \vdots & \ddots & \vdots  & \vdots & \ddots & \vdots\\
 a_{d+k,1} & \cdots  & a_{d+k,d}  & y_{d+k,1} &\cdots & y_{d+k,k}\\
  \end{array}
\right|,
\]
which is nonzero. So $\Delta$ is nonzero. Collecting the like terms of the determinant. we see that $\Delta$ is of the form~\eqref{EQ:poly}.
\end{proof}
\begin{lemma}\label{LEM:mon}
Let~$A=(a_{ij})$ be a full rank $(d+k) \times d$ matrix over~$\bK$, $Z_1, \ldots, Z_k$ be mutually disjoint sets of indeterminates.
Let~$b_{1,j}, \ldots, b_{d+k,j}$ be distinct terms in the indeterminates belonging to~$Z_j$, $j=1, \ldots, k$.
Then the determinant
\[
D = \left|
  \begin{array}{cccccc}
    a_{1,1} & \cdots  & a_{1, d}  & b_{1,1}  &\cdots & b_{1,k} \\
    a_{2,1} & \cdots  & a_{2, d}  & b_{2,1} &\cdots & b_{2,k} \\
    \vdots & \ddots & \vdots  & \vdots & \ddots & \vdots \\
 a_{d+k,1} & \cdots  & a_{d+k,d}  & b_{d+k,1} &\cdots & b_{d+k,k}\\
  \end{array}
\right|
\]
is a nonzero polynomial in~$\bK[Z_1 \cup \cdots \cup Z_n]$.
\end{lemma}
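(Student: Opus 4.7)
My plan is to deduce Lemma~\ref{LEM:mon} from Lemma~\ref{LEM:indet} by a substitution argument. First, I would introduce fresh indeterminates $y_{ij}$ in place of the terms $b_{ij}$ and apply Lemma~\ref{LEM:indet} to obtain an expression
\[
\Delta = \sum_{(i_1,\ldots,i_k) \in S} \alpha_{i_1,\ldots,i_k}\, y_{i_1,1} \cdots y_{i_k,k},
\]
with $S \subset \bN^k_{d+k}$ nonempty and each $\alpha_{i_1,\ldots,i_k} \in \bK$ nonzero. Specializing $y_{ij} := b_{ij}$ recovers the determinant $D$, so
\[
D = \sum_{(i_1,\ldots,i_k) \in S} \alpha_{i_1,\ldots,i_k}\, b_{i_1,1} \cdots b_{i_k,k}.
\]
The remaining task is to show that no cancellation can occur on the right-hand side.

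The key point, and really the only substantive step, is to exploit the disjointness of the variable sets $Z_1, \ldots, Z_k$. Since $b_{i_j,j}$ is a term in the variables of $Z_j$ and the $Z_j$ are pairwise disjoint, the product $b_{i_1,1}\cdots b_{i_k,k}$ is itself a term in $\bK[Z_1 \cup \cdots \cup Z_k]$ whose factorization into $Z_j$-parts is unique. Consequently, from such a product one can read off each $b_{i_j, j}$ as its $Z_j$-component. If $(i_1,\ldots,i_k) \neq (i_1',\ldots,i_k')$ and, say, $i_j \neq i_j'$, then the hypothesis that $b_{1,j},\ldots,b_{d+k,j}$ are distinct terms forces $b_{i_j,j} \neq b_{i_j',j}$, so the two products differ in their $Z_j$-components and hence as terms.

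Therefore the terms $b_{i_1,1}\cdots b_{i_k,k}$ indexed by $(i_1,\ldots,i_k) \in S$ are pairwise distinct, making $D$ a $\bK$-linear combination of pairwise distinct monomials with all coefficients $\alpha_{i_1,\ldots,i_k}$ nonzero. This forces $D \neq 0$ in $\bK[Z_1\cup\cdots\cup Z_k]$, as required. I expect no real obstacle: the lemma is essentially a bookkeeping statement, and the only substantive ingredient is the separation-of-variables observation that converts Lemma~\ref{LEM:indet} into a statement about monomials in disjoint variable sets.
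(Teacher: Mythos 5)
Your proposal is correct and follows essentially the same route as the paper: substitute the terms $b_{ij}$ for the indeterminates $y_{ij}$ in the expansion provided by Lemma~\ref{LEM:indet}, then observe that the disjointness of $Z_1,\ldots,Z_k$ and the distinctness of $b_{1,j},\ldots,b_{d+k,j}$ make the resulting monomials pairwise distinct, so no cancellation occurs and $D\neq 0$. Your write-up is in fact slightly more explicit than the paper's, which compresses the separation-of-variables step into the phrase ``by the definition of the $b_{ij}$'s.''
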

\begin{proof}
By~\eqref{EQ:poly},
\begin{equation} \label{EQ:expr}
D  = \sum_{(i_1, \ldots, i_n) \in W} \alpha_{i_1, \ldots, i_n} b_{i_1,1} \cdots b_{i_n,n}.
\end{equation}
For two distinct elements $(i_1, \ldots, i_n), (j_1, \ldots, j_n) \in W$,  the two terms $b_{i_1,1} \cdots b_{i_n,n}$
and $b_{j_1,1} \cdots b_{j_n,n}$ are also distinct by the definition of $b_{ij}$'s. Hence, there are no like terms to be collected in the right-hand side of~\eqref{EQ:expr}, which implies that $D$ is nonzero.
\end{proof}

\begin{thm} \label{THM:random}
Let~$G$ be a primitive Gr\"obner basis of rank $d$.
Assume that the origin is an apparent singularity of~$G$, and~$f_1, \ldots, f_d$ be $\bK$-linearly independent formal power series solutions
of~$G$ with distinct initial exponents~$\bu_1$, \ldots, $\bu_d$, respectively.
Set $m=\max_{1 \le i \le d}|\bu_i|$ and $\bN_m^n \setminus \IE_0(G) = \{ \bu_{d+1}, \ldots, \bu_\ell\}$.
For each~$j \in \{1, \ldots, \ell-d\}$, let $f_{d+j}$ be the formal power series expansion of
\[ \exp\left(z_{1,j}x_1 + \cdots + z_{n,j}x_n\right) \]
at the origin, where $z_{1j}, \ldots, z_{nj}$ are distinct constant indeterminates. Let $A$ be the $\ell \times \ell$ matrix
whose element at the $i$th row and $j$th column is the formal power series $\bpa^{\bu_i} f_j$ evaluated at the origin, $i,j \in \{1, \ldots, \ell\}$.
Then
\begin{itemize}
\item[(i)] $\det(A)$ is a nonzero polynomial in~$\bK[z_{1,1}, \ldots, z_{n,1}, \ldots, z_{1,\ell-d}, \ldots, z_{n,\ell-d}].$
\item[(ii)] For~$1 \le i \le n$ and~$1 \le j \le \ell-d$, let~$c_{i,j}$ be elements of~$\bK$.
If $\det(A)$ does not vanish at $(c_{1,1}, \ldots, c_{n,1}, \ldots, c_{1 , \ell-d}, \ldots, c_{n,\ell-d})$,
then the origin is an ordinary point of the primitive Gr\"obner basis $M$ of
\[ \Drat G \cap \left(\bigcap_{j=1}^{\ell-d} \Drat \left\{ \pa_1 - c_{1,j}, \ldots, \pa_n - c_{n,j} \right\}\right). \]
\end{itemize}
\end{thm}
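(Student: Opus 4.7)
The plan is to exploit the block structure of $A$---a deterministic left block of size $\ell\times d$ coming from $f_1,\dots,f_d$ and an ``exponential'' right block of size $\ell\times(\ell-d)$ coming from $f_{d+1},\dots,f_\ell$---applying Lemma~\ref{LEM:mon} for part~(i) and Lemma~\ref{LEM:wronskianrep} for part~(ii). For~(i), reorder indices so that $\bu_1\prec\cdots\prec\bu_d$. Since $\bu_i$ is the initial exponent of $f_i$, every monomial of $f_i$ is $\succeq\bx^{\bu_i}$; in particular, for $i<j\le d$ the coefficient of $\bx^{\bu_i}$ in $f_j$ vanishes, while the coefficient of $\bx^{\bu_j}$ in $f_j$ is nonzero. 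Thus the top-left $d\times d$ block of $A$ is lower triangular with nonzero diagonal, which makes the first $d$ columns of $A$ a full-rank $\ell\times d$ matrix over $\bK$. In column $d+j$ of $A$, the entry at row $i$ is
\[ \phi\bigl(\bpa^{\bu_i}\exp(z_{1,j}x_1+\cdots+z_{n,j}x_n)\bigr) = z_{1,j}^{u_{i,1}}\cdots z_{n,j}^{u_{i,n}}, \]
so this column only involves the indeterminates in $Z_j:=\{z_{1,j},\dots,z_{n,j}\}$, the $Z_j$'s are pairwise disjoint, and within each right-block column the $\ell$ entries are pairwise distinct monomials because the $\bu_i$ are distinct. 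Lemma~\ref{LEM:mon} therefore yields that $\det(A)$ is a nonzero element of $\bK[Z_1\cup\cdots\cup Z_{\ell-d}]$.

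For~(ii), fix $(c_{i,j})$ at which $\det(A)$ does not vanish, and set $I_j:=\Drat\{\pa_1-c_{1,j},\ldots,\pa_n-c_{n,j}\}$, whose solution space is the one-dimensional $C_\bE$-line generated by $\exp(c_{1,j}x_1+\cdots+c_{n,j}x_n)$. Applying Lemma~\ref{LEM:exact}(iii) inductively to the defining intersection of $\Drat M$ gives
\[ \sol_\bE(M) = \sol_\bE(G) + \sum_{j=1}^{\ell-d}\sol_\bE(I_j), \]
so the specialized versions of $f_1,\dots,f_\ell$ all belong to $\sol_\bE(M)$. Conversely, $\det(A)(c_{i,j})$ equals the constant term of the Wronskian-like determinant $\bigl(\bigwedge_{i=1}^{\ell}\bpa^{\bu_i}\bigr)(f_1,\dots,f_\ell)$ evaluated on the specialized tuple, and its nonvanishing forces the $f_j$'s to be $C_\bE$-linearly independent (cf.\ Proposition~\ref{PROP:wr}(i)). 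Since the right-hand sum above has dimension at most $d+(\ell-d)=\ell$, we conclude $\dim_{C_\bE}\sol_\bE(M)=\ell=|\bN_m^n|$.

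To finish, observe that the specialized matrix $A$ is invertible over $\bK$, so the combinations $g_j:=\sum_k(A^{-1})_{kj}f_k\in\bK[[\bx]]$ satisfy $\phi(\bpa^{\bu_i}g_j)=\delta_{ij}$ for all $i,j$. Since $\{\bu_1,\dots,\bu_\ell\}=\IE_\bzero(G)\cup(\bN_m^n\setminus\IE_\bzero(G))=\bN_m^n$ and $\prec$ is graded, any $\bx^\bv$ with $\bv\notin\bN_m^n$ satisfies $|\bv|>m$ and so lies strictly above every $\bx^{\bu_j}$; it follows that the initial exponent of $g_j$ is exactly $\bu_j$. Therefore $\sol_\bE(M)$ has a basis in $\bK[[\bx]]$ with $\IE_\bzero(M)=\bN_m^n$, and Lemma~\ref{LEM:wronskianrep} concludes that the origin is an ordinary point of $M$.

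The main obstacle, I expect, is part~(i): once it is settled, part~(ii) is essentially dimension bookkeeping together with Lemma~\ref{LEM:wronskianrep}. The delicate step in~(i) is recognizing that the left $\ell\times d$ block of $A$ is of full rank over $\bK$, which hinges on the apparent-singularity hypothesis (through the existence of $f_1,\dots,f_d\in\bK[[\bx]]$ with distinct initial exponents) rather than on any numerical feature of~$G$ itself.
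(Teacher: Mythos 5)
Your proof is correct and takes essentially the same route as the paper: part (i) via the lower-triangular left block together with Lemma~\ref{LEM:mon}, and part (ii) via specialization, the Wronskian-type independence criterion, Lemma~\ref{LEM:exact}(iii), and finally Lemma~\ref{LEM:wronskianrep}. The only (cosmetic) difference is the last step, where you normalize with the explicit inverse of the specialized matrix to get $\phi(\bpa^{\bu_i}g_j)=\delta_{ij}$ and read off the initial exponents directly, whereas the paper invokes an abstract change-of-basis matrix $C$ yielding series with distinct initial exponents and excludes exponents outside $\bN^n_m$ by a full-rank contradiction.
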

\begin{proof}
We need two ring homomorphisms in the proof.

Let $R = \bK[z_{1,1}, \ldots, z_{n,1}, \ldots, z_{1,\ell-d}, \ldots, z_{n,\ell-d}].$  We define $\phi$ to be
the homomorphism from~$R[[\bx]]$ to~$R$ that takes the constant term of a formal power series in~$\bx$, which extends
the homomorphism from~$\bK[[\bx]]$ to~$\bK$ defined in Section~\ref{SUBSECT:fps}. For every nonzero formal power series
$f \in R$ with initial exponents $\bv$, we have  
\begin{equation} \label{EQ:initial}
\text{$\forall \, \bw \in \bN^n$ with $\bw \prec \bv,$} \,\, \phi\left(\bpa^\bw(f)\right) = 0
\quad \text{and} \quad \phi\left(\bpa^\bv(f)\right) \neq 0.
\end{equation}
by~\eqref{EQ:zf}. Let~$\psi: R \longrightarrow \bK$ be the substitution
that maps~$z_{ij}$ to~$c_{ij}$ for every $i \in \{1, \ldots, n\}$ and~$j \in \{1, \ldots, \ell-d\}$. Then~$\psi$ is a ring homomorphism. We extend~$\psi$ to a homomorphism from~$R[[\bx]]$ to~$\bK[[\bx]]$ by the rule $\psi(x_i)=x_i$. $i=1,\ldots, n$. The extended homomorphism is also denoted by~$\psi$.

(i) Without loss of generality, we order the initial exponents $\bu_1, \ldots, \bu_d$ increasingly with respect to~$\prec$.
Then the submatrix consisting of the first $d$ rows and first $d$ columns in~$A$ is in a lower triangular form whose elements
in the diagonal are all nonzero by~\eqref{EQ:initial}.
Thus, the first $d$ columns of $A$ are linearly independent over~$\bK$. Let~$\bz_j=(z_{1,j}, \ldots, z_{n,j})$, $j=1, \ldots, \ell-d$.
Then the $(d+j)$th column of~$A$ consists of $\bz_j^{\bu_1}, \ldots, \bz_j^{\bu_\ell}$, which are distinct terms in~$\bz_j$. Thus,~$\det(A)$ is nonzero by Lemma~\ref{LEM:mon}.

(ii) Let~$g_j = \psi(f_j)$ for~$j=1, \ldots, \ell$. Then the element at the $i$th row and $j$th column of $\psi(A)$ is the image of~$\phi\left(\bpa^\bu_i(g_j)\right)$ for all $i,j \in \{1, \ldots, \ell\}$£¬
because $\psi \circ \bpa^{\bu_i} = \bpa^{\bu_i} \circ \psi$. Moreover,~$\det(\psi(A))$ is nonzero because $\det(A)$ does not vanish at $(c_{1,1}, \ldots, c_{n,1}, \ldots, c_{1, \ell-d}, \ldots, c_{n, \ell-d})$.
Let~$B$ be the $\ell \times \ell$ matrix whose element at the $i$th row and $j$th column is equal to $\bpa^{\bu_i} g_j$ for all $i,j \in \{1, \ldots, \ell\}$.
Then $\phi(B) = \psi(A)$.  Thus, $\det(\phi(B))$ is nonzero, and so is $\det(B)$.
It follows that $g_1, \ldots, g_\ell$ are linearly independent over $\bK$ by Theorem~1 in~\cite[Chapter II]{Kolchin1973}
(see also Section~\ref{SUBSECT:sw}).

Set $I = \Drat G$ and $I_j =  \Drat \left\{ \pa_1 - c_{1,j}, \ldots, \pa_n - c_{n,j} \right\}$ for all~$j$ in $\{1, \ldots, \ell-d\}$. Then $g_1, \ldots, g_d$ form a basis of $\sol_\bE(I)$, because $g_i=f_i$, $i=1, \ldots, d$,
and  $g_{d+j}$ spans $\sol_\bE(I_j)$, because $g_{d+j}$ is the formal power series expansion
of
$\exp\left(c_{1,j}x_1 + \cdots + c_{n,j}x_n\right)$ at the origin
for all~$j \in \{1, \ldots, \ell-d\}$.
It follows from Lemma~\ref{LEM:exact} (iii) that $g_1, \ldots, g_\ell$ form a basis of $\sol_\bE(M)$.

To prove that the origin is an ordinary point of $M$, it suffices to find a basis of $\sol_\bE(M)$ in $\bK[[\bx]]$ whose initial exponents are exactly
the elements of $\bN^n_m$ by Lemma~\ref{LEM:wronskianrep}. Since $g_1, \ldots, g_\ell$ are linearly independent over $\bK$,
there exists an $\ell \times \ell$ matrix $C$ over $\bK$  such that
\[   (h_1, \ldots, h_\ell) = (g_1, \ldots, g_\ell) C, \]
in which $h_1, \ldots, h_\ell$ have distinct initial exponents.

Set~$H=BC$. Then $H$ is the $\ell \times \ell$ matrix
whose element at the $i$th row and $j$th column is equal to~$\bpa^{\bu_i} h_j$. Moreover,~$\phi(H)$ is of full rank since
 $\phi(H)$ is equal to  $\phi(B) C$. Suppose that there exists $j \in \{1, \ldots, \ell\}$ such that its initial exponent does not belong
 to~$\bN^n_m$. Then it is higher than any element in~$\bN^n_m$, because $\prec$ is graded. It follows from~\eqref{EQ:initial}
 that the $j$th column of~$\phi(H)$ is a zero vector by~\eqref{EQ:initial}, a contradiction. Therefore, the initial exponents
 of $h_1$, \ldots, $h_\ell$ are exactly the elements of~$\bN_m^n$.
\end{proof}


\begin{algo} \label{ALGO:rand}
Given a primitive Gr\"obner basis $G$ of finite rank such that the origin is an apparent singularity of $G$,
compute a left multiple $M$ of $G$ such that the origin is an ordinary point of~$M$ or return ``fail''.

\begin{itemize}
\item[(1)] Set $d:=\rank(G)$.
\item [(2)] Compute a set of initial exponent candidates $S$ by Theorem~\ref{THM:indideal}.
 \item [(3)] For each $B \subset S$ with $|B|=d$,
 \begin{itemize}
 \item[(3.1)] set $m:= \max_{\bu \in B} |\bu|$ and $\ell:=|\bN^n_m|$;
 \item[(3.2)] choose a point $\bc = (c_{1,1}, \ldots, c_{n,1}, \ldots, c_{1,\ell-d}, \ldots, c_{n,\ell-d}) \in \bK^{n(\ell-d)}$;
 \item[(3.3)]compute the primitive Gr\"obner basis $M_B$ of the left ideal
 \[  \Drat G \cap \left( \bigcap_{j=1}^{\ell-d} \Drat \{ \pa_1 - c_{1,j}, \ldots, \pa_n - c_{n,j} \} \right); \]
 \item[(3.4)] if the origin is an ordinary point, then return $M_B$.
 \end{itemize}
 \item[(4)] return ``fail''.
\end{itemize}
\end{algo}
The above algorithm clearly terminates.  If $B=\IE_0(G)$ and $\det(A)$ given in Theorem~\ref{THM:random} does not vanish at $\bc$,
then the origin is an ordinary point of~$M_B$ by Theorem~\ref{THM:random}. So it does not return ``fail'' unless $\bc$ lies
on the variety defined by $\det(A)=0$. In this sense, we say that the above algorithm succeeds with probability one.
An advantage of the above algorithm is that it is more efficient to compute a Gr\"obner basis of the intersection of several
left ideals, most of which are generated by first-order operators with constant coefficients.
Another advantage is that this algorithm is likely to remove all apparent singularity, not just the origin,
because almost all choices of $c_{i,j}$ will also work for apparent singularities at almost any other point.
On the other hand,
it is not convenient to apply Theorem~\ref{THM:random} to determine whether the origin is an apparent singularity,
because the above algorithm will always return ``fail'' if the origin is a singularity but not an apparent one.


\begin{ex} \label{EX:randomappsin}
Consider the Gr\"{o}bner basis in Example~\ref{EX:appsin1}:
\[
\begin{array}{lll}
 G & = & \{G_1, G_2 \} \\
   & = & \{x_2 \pa_2 + \pa_1 - x_2 - 1, \pa_1^2 - \pa_1\}.
\end{array}
\]
In this case, $n= 2$ and $d := \rank(G) = 2$ and the origin is an apparent singularity of $G$. Set
\[
\begin{array}{lll}
 P & = & (\frac{1}{x_2} \pa_2 - \frac{1}{x_2^2} \pa_1 - \frac{1}{x_2})G_1 + \frac{1}{x_2^2}G_2  \\
   & = & \pa_2^2 - 2 \pa_2 + 1.
\end{array}
\]
We find that $\ind(P) = y_2 (y_2 - 1)$, $\ind(G_1) = y_1$
and $\ind(G_2) = y_1 (y_1 - 1)$.
Thus, a set of initial exponent candidates of $G$ is $S = \{(0, 0), (0, 1) \}.$
Set $B = S$ and $\ell = |\bN_1^2| = 3$.
Choose $\bc = (19, 23) \in \bK^{n(\ell -d)} = \bK^2$.
Let $M_B$ be the primitive Gr\"{o}bner basis of the left ideal
\[
 \Drat G \cap \Drat \{\pa_1 - 19, \pa_2 - 23 \}.
\]
We find that
$\HC(M_B) = \{9 + 11 x_2\}.$
It follows from Definition~\ref{DEF:op} that~$M_B$ is a left multiple of $G$ for which
the origin is an ordinary point.
\end{ex}

In the above example, if we take the roots of $\{ \ind(P), \ind(G_2) \}$
as a set of initial exponent candidates of $G$, then it strictly contains the set of initial exponents of $G$.

\subsection{Formal power series solutions at apparent singularities} \label{SUBSECT:fpsappsin}

Let $G$ be a primitive Gr\"{o}bner basis. Assume that the origin is an
apparent singularity of $G$. Then $\sol_\bE(G)$ has a basis of formal power series. In this subsection, we present an
algorithm for computing such a basis truncated at a given total degree.

Consider a fixed series $f = \sum_{\bu \in \bN^n} \frac{c_{\bu}}{\bu !} \bx^{\bu} \in \bK[[\bx]]$.
For each $m \in \bN$, set
\[
(f)_m =  \sum_{|\bu| \leq m} \frac{c_{\bu}}{\bu !} \bx^{\bu} \in \bK[\bx].
\]
We call $(f)_m$ the {\em $m$-th truncated power series}~\cite[page 35]{Kauers2011} of~$f$.
As a matter of notation, we write $T_m \subset \bK[\bx]$ for the ideal generated by~$\{\bx^\bu \mid |\bu| = m \}$.


Our idea is based on the proof of the necessity in Theorem~\ref{THM:rmappsin}. Assume that $\rank(G)=d$.
We desingularize the origin by a left multiple $M$ of~$G$, and then compute a
power series basis~$f_1, \ldots, f_\ell$ of~$\sol_\bE(M)$. Every power series solution of~$G$
is a linear combination of~$f_1, \ldots, f_\ell$ over~$\bK$ because $\sol_\bE(G) \subset \sol_\bE(M)$.
More precisely, every solution of the system $A \bz = {\bf 0}$ in~\eqref{EQ:basis} gives rise to a power series solution of~$G$ and vise versa.
Assume that the rows of $A$ are labelled by the elements of~$\bN^n$ increasingly with respect to~$\prec$. And let $A_m$ be the matrix
consisting of all rows labelled by the elements whose total degree are not higher than $m$.
Then there exists $m \in \bN$ such that $A_m$ has a right kernel of dimension~$d$,
because the solution space of $A \bz = \bzero$ in~\eqref{EQ:basis} is of dimension $d$ over~$\bK$.
So we just need to find the matrix $A_m$ incrementally using
$(f_1)_{m + r}$, \ldots, $(f_\ell)_{m + r}$
until $\rank(A_m) = \ell-d$, and then to compute a basis of the right kernel of~$A_m$,
where~$r$ is the maximal order of elements in $G$.
 This idea is encoded in the following algorithm.

\begin{algo} \label{ALGO:fpsappsin}
Given $m \in \bN$, and a primitive Gr\"obner basis $G$ of rank~$d$
such that the origin is an apparent singularity, compute polynomials $p_1, \ldots, p_d \in \bK[\bx]$
such that there exist $\bK$-linearly independent power series solutions $g_1, \ldots, g_d$ of~$G$ with the property
\[  p_1 = (g_1)_m, \ldots, p_d = (g_d)_m. \]
\begin{itemize}
\item[(1)] [Desingularize]
By Algorithm~\ref{ALGO:desingularization1}, compute a left multiple $M$ of $G$ such that the origin is an ordinary point of $M$.
And set~$\ell := \rank(M)$.
\item[(2)] [Initialize] Let $r$ be the maximal order of elements in $G$.
Set~$s = m$ and $z_1, \ldots, z_\ell$ to be constant indeterminates,
\item[(3)] [Construct a matrix of the maximal rank incrementally] Repeat
\begin{itemize}
\item[(3.1)] Set $s=s+1$.
\item[(3.2)] By formula~\eqref{EQ:ansatz}, compute a truncated power series in the form
\begin{equation} \label{EQ:par}
  \left( \sum_{\bu \in \bN^n} \frac{c_\bu}{\bu !} \bx^\bu \right)_{s + r},
\end{equation}
in which $c_\bu$ is an arbitrary constant for every~$\bu \in \PE(M)$.
\item[(3.3)] For each $\bu \in \PE(M)$, specialize $c_\bu =1$ and $c_{\bu^\prime}=0$ with~$\bu^\prime \neq \bu$ in \eqref{EQ:par}
to obtain polynomials
$h_1$, \ldots, $h_\ell$.
\item[(3.4)] Construct a matrix~$A_s$ with~$\ell$ columns over~$\bK$ such that its right kernel is equal to the solution space of the linear system
\[   G_t(z_1 h_1 + \cdots + z_\ell h_\ell) \equiv  0 \mod T_{s + 1}, \ \ \,\, t=1, \ldots, k. \]
\end{itemize}
Until~$\rank(A_s)=\ell-d$.
\item[(4)] [Compute truncated solutions]
Find a basis $\left(c_{1,1}, \ldots, c_{\ell, 1}\right)^t$, \ldots, $\left(
c_{1,d},  \ldots,  c_{\ell, d} \right)^t$ for the right kernel of~$A_s$ and set
\[    p_j = \left(\sum_{i=1}^\ell c_{i,j} h_j \right)_m, \quad j=1, \ldots, d. \]
\item[(5)] Return $p_1, \ldots, p_d$.
\end{itemize}
\end{algo}

\begin{ex} \label{EX:fpsappsin}
Consider the Gr\"{o}bner basis in Example~\ref{EX:appsin1}:
\[
\begin{array}{lll}
 G & = & \{G_1, G_2 \} \\
   & = & \{x_2 \pa_2 + \pa_1 - x_2 - 1, \pa_1^2 - \pa_1\}.
\end{array}
\]
In this case, $\rank(G) = 2$ and the origin is an apparent singularity of $G$.
We compute 2nd order truncated power series of a basis of $\sol_{\bE}(G)$.

(1) Let $M$ be the primitive Gr\"{o}bner basis of the left ideal
\[
 \Drat G \cap \Drat \{x_1 \pa_1 - 1, \pa_2 \}.
\]
We find that the origin is an ordinary point of $M$ and $\rank(M) = 3$.

(2) Let r = 2 be the maximal order of elements in G.
Set $s = 2$ and $z_1, z_2, z_3$ to be constant indeterminates.

(3) By formula~\eqref{EQ:ansatz},
we obtain the following $4$-th truncated power series of a basis of $\sol_{\bE}(M)$:
\[
\begin{array}{lll}
 h_1 & = & \left(\exp(x_1 + x_2) - x_1 - x_2 \exp(x_2)  \right)_4, \\
 h_2 & = & x_1, \\
 h_3 & = & \left( x_2 \exp(x_2) \right)_4.
\end{array}
\]
We find a matrix $A_2 \in \bK^{12 \times 3}$ of rank $1$ such that its right kernel is equal to the solution space of the linear system
\[   G_t(z_1 h_1 + z_2 h_2 + z_3 h_3) \equiv  0 \mod T_{3}, \ \ \,\, t = 1, 2. \]

(4) We find that the right kernel of $A_2$ has a basis $(1, 1, 0)^t, (0, 0, 1)^t$, and set
\[
\begin{array}{lll}
 p_1 & = & (h_1 + h_2)_2 = \left(\exp(x_1 + x_2) - x_2 \exp(x_2) \right)_2, \\
 p_2 & = & (h_3)_2 =  \left( x_2 \exp(x_2) \right)_2.
\end{array}
\]

(5) Return $p_1, p_2$.

Actually, $\sol_{\bE}(G)$ has a basis $\{\exp(x_1 + x_2), x_2 \exp(x_2) \}$.
So, $p_1, p_2$ are indeed 2nd order truncated power series of a basis of $\sol_{\bE}(G)$.
%
\end{ex}

\bibliographystyle{abbrv}
\def\cprime{$'$}

\end{document}